\mathchardef\mhyphen="2D
\newcommand\placeqed{\tiny{\nobreak\enspace$\square$}}
\newtheorem{mydef}{Definition}
\newtheorem{proposition}{Proposition}
\newcommand\myeq{\mathrel{\stackrel{\makebox[0pt]{\mbox{\normalfont\tiny def}}}{=}}}
\def\ps@pprintTitle{%
	\let\@oddhead\@empty
	\let\@evenhead\@empty
	\let\@evenfoot\@oddfoot}
\begin{document}

\begin{frontmatter}

\title{On Computing Stable Extensions of Abstract Argumentation Frameworks}

\author{Samer Nofal}
\author{Amani Abu Jabal}
\author{Abdullah Alfarrarjeh}
\author{Ismail Hababeh}
\address{Department of Computer Science, German Jordanian University, Jordan}

\begin{abstract}
An \textit{abstract argumentation framework} ({\sc af} for short) is a directed graph $(A,R)$ where $A$ is a set of \textit{abstract arguments} and $R\subseteq A \times A$ is the \textit{attack} relation. Let $H=(A,R)$ be an {\sc af}, $S \subseteq A$ be a set of arguments and $S^+ = \{y \mid \exists x\in S \text{ with }(x,y)\in R\}$. Then, $S$ is a \textit{stable extension} in $H$ if and only if $S^+ = A\setminus S$. In this paper, we present a thorough, formal validation of a known backtracking algorithm for listing all stable extensions in a given {\sc af}.
\end{abstract}

\begin{keyword}
directed graph \sep backtracking algorithm \sep stable argumentation
\end{keyword}

\end{frontmatter}

\section{Introduction}

 An \textit{abstract argumentation framework} ({\sc af} for short) is a directed graph $(A,R)$ where $A$ is a set of \textit{abstract arguments} and $R \subseteq A \times A$ is the \textit{attack} relation between them. Let $H=(A,R)$ be an {\sc af}, $S \subseteq A$ be a subset of arguments and $S^+ = \{y \mid \exists x\in S \text{ with }(x,y)\in R\}$. Then, $S$ is a \textit{stable extension} in $H$ if and only if $S^+=A\setminus S$.

Since introduced in \cite{DBLP:journals/ai/Dung95}, {\sc af}s have attracted a substantial body of research (see e.g. \cite{DBLP:journals/aim/AtkinsonBGHPRST17,TheAddedValueOfArgumentation,DBLP:journals/ker/BaroniCG11}). Stable extension enumeration is a fundamental computational problem within the context of {\sc af}s. Listing all stable extensions in an {\sc af} is an {\sc np}-hard problem, see for example the complexity results presented in \cite{DUNNE2007701,DBLP:journals/flap/DvorakD17}. In the literature one can find different proposed methods for listing all stable extensions in a given {\sc af}, such as dynamic programming and reduction-based methods, see for example \cite{DBLP:journals/ai/CharwatDGWW15,DBLP:journals/flap/CeruttiGTW17} for a fuller review. However, this paper is centered around backtracking algorithms for listing stable extensions in an {\sc af}. 

In related work, backtracking algorithms are often called \textit{labelling} algorithms. The work of \cite{DBLP:journals/amai/DimopoulosMP97} proposed a backtracking algorithm that can be used for generating stable extensions of {\sc af}s. Later, the work of \cite{DBLP:journals/ijar/NofalAD16} enhanced the algorithm of \cite{DBLP:journals/amai/DimopoulosMP97} by a look-ahead mechanism. Most recently, the system of \cite{DBLP:conf/tafa/GeilenT17} implemented a backtracking algorithm that is similar to the essence of the algorithm of \cite{DBLP:journals/ijar/NofalAD16}; however, \cite{DBLP:conf/tafa/GeilenT17} made use of heuristics to examine their effects on the practical efficiency of listing stable extensions. All these works (i.e. \cite{DBLP:journals/amai/DimopoulosMP97,DBLP:journals/ijar/NofalAD16,DBLP:conf/tafa/GeilenT17}) presented \textit{experimental} backtracking algorithms. 

Generally, there are several empirical studies on algorithmic methods in abstract argumentation research, see e.g. \cite{DBLP:journals/aim/ThimmVCOSV16,AIMAGcompetition,DBLP:journals/logcom/BistarelliRS18}. However, to the best of our knowledge, we do not find in the literature a comprehensive, formal demonstration of a backtracking algorithm for listing stable extensions in a given {\sc af}. Therefore, in the present paper we give a rigorous, formal validation of a backtracking algorithm (which is comparable to the core structure of the algorithm of \cite{DBLP:journals/ijar/NofalAD16}) for listing all stable extensions in an {\sc af}. 

In section 2, we validate a first version of the algorithm, which is a formulation that naturally builds on the definition of stable extensions. In section~3, we prove a second version of the algorithm, which is a development of the first version. As the first version of the algorithm subsumes (somewhat) high-level operations, the second version comes to specify how these operations are implemented efficiently. We believe that proving first the algorithm in a high-level form enables the reader to follow easily the validation of the implementation given in the second version. The second version is the ultimate construct that one needs to realize the algorithm in any prospective application. In section 4, we close the paper with some concluding remarks.

\section{Validation of an algorithm for listing stable extensions}
Let $H=(A,R)$ be an {\sc af} and $T \subseteq A$ be a subset of arguments. Then,
\begin{equation*}
\begin{array}{l}
	T^+ \myeq \{y \mid \exists x\in T \text{ with }(x,y)\in R\},\\
	T^- \myeq \{y \mid \exists x\in T \text{ with }(y,x)\in R\}. 
	\end{array}
\end{equation*}	
	For the purpose of enumerating all stable extensions in $H$, let $S$ denote an under-construction stable extension of $H$. Thus, we start with $S=\emptyset$ and then let $S$ grow to a stable extension (if any exists) incrementally by choosing arguments from $A$ to join $S$. For this, we denote by $choice$ a set of arguments eligible to join $S$. More precisely, take $S\subseteq A$ such that $S \cap S^+=\emptyset$, then $choice \subseteq A \setminus (S \cup S^- \cup S^+)$. Additionally, we denote by $tabu$ the arguments that do not belong to $S \cup S^+ \cup choice$. More specifically, take $S\subseteq A$ such that $S\cap S^+=\emptyset$, and $choice\subseteq A \setminus (S \cup S^- \cup S^+)$. Then, $tabu= A \setminus (S \cup S^+ \cup choice)$. The next example shows these structures in action.  \\

\noindent \textbf{Example 1.} To list the stable extensions in the {\sc af} $H_{1}$ depicted in Figure~1, apply the following steps:
 \begin{enumerate}
  \item Initially, $S=\emptyset$, $S^+=\emptyset$, $choice =\{a,b,c,d,e,f\}$, and $tabu=\emptyset$. 
  \item Select an argument, say $a$, to join $S$. Then, $S=\{a\}$, $S^+=\{b\}$, $choice=\{c,d\}$ and $tabu=\{e,f\}$. 
  \item As $\{c,d\}^- \subseteq S^+ \cup tabu$, $\{c,d\}$ must join $S$. Otherwise, $S=\{a\}$ without $\{c,d\}$ will never expand to a stable extension since $c$ and $d$ never join $S^+$. Thus, $S=\{a,c,d\}$, $S^+=\{b,e,f\}$, $choice=\emptyset$ and $tabu=\emptyset$. As $S^+=A \setminus S$, $S$ is stable. 
  \item To find another stable extension, backtrack to the state of step 1 and then try to build a stable extension without~$a$. Hence, $S=\emptyset$, $S^+=\emptyset$, $choice=\{b,c,d,e,f\}$ and $tabu=\{a\}$. 
  \item Select an argument, say $b$, to join $S$. Then, $S=\{b\}$, $S^+=\{c,d\}$, $choice=\{e,f\}$ and $tabu=\{a\}$. 
  \item As $\{e\}^- \subseteq S^+$, $e$ must join $S$. Otherwise, $S=\{b\}$ without $e$ will not grow to a stable extension since $e$ never joins $S^+$. Hence, $S=\{b,e\}$, $S^+=\{a,c,d,f\}$, $choice=\emptyset$ and $tabu=\emptyset$. Since $S^+= A\setminus S$, $S$ is stable. 
  \item Backtrack to the state of step 4 and then attempt to build a stable extension excluding $b$. Thus, $S=\emptyset$, $S^+=\emptyset$, $choice=\{c,d,e,f\}$ and $tabu=\{a,b\}$. 
  \item Since $\{d\}^-\subseteq tabu$, $d$ must join $S$. Otherwise, $S=\emptyset$ without $d$ will never grow to a stable extension because $d$ never joins $S^+$. Subsequently, $S=\{d\}$, $S^+=\{b,e,f\}$, $choice=\{c\}$ and $tabu=\{a\}$.
  \item As $\{a\}^-\subseteq S^+$ and $a \in tabu$, $a$ will never join $S^+$ and so $S=\{d\}$ will never grow to a stable extension. 
\item At this point, we confirm that there are no more stable extensions to find. This is because the state of step 7, being analyzed in steps 8 \&~9, an assertion is concluded that trying to build a stable extension from $choice=\{c,d,e,f\}$ (i.e. excluding $tabu=\{a,b\}$) will never be successful. Note, we already tried to build a stable extension including $a$ (step 2), and later (step 5) we tried to construct a stable extension including $b$ but without $a$.
\end{enumerate}

\begin{figure}
	\centering	
	\includegraphics[width=5cm]{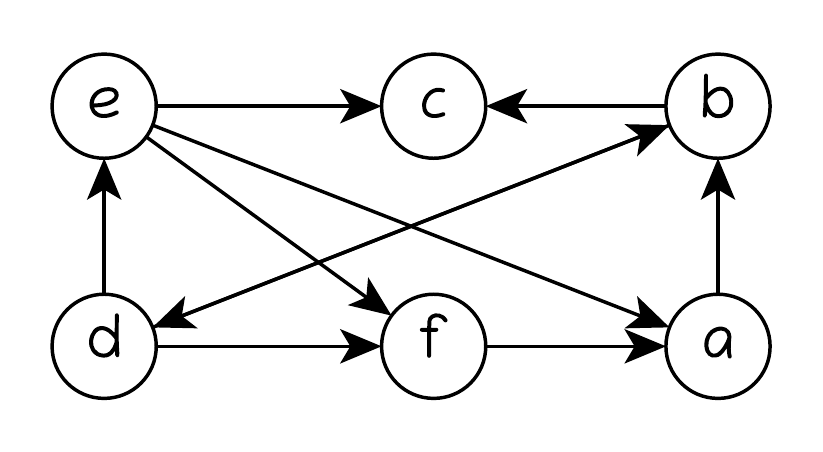}
	\caption{Argumentation framework $H_{1}$.}
\end{figure}

Our first proposition captures stable extensions when $choice=tabu=~\emptyset$. 

\begin{proposition}
	Let $H=(A,R)$ be an {\sc af}, $S\subseteq A$, $S \cap S^+ = \emptyset$, $choice \subseteq A\setminus (S \cup S^+\cup S^-)$, $tabu = A\setminus (S\cup S^+\cup choice)$. Then, $S$ is a stable extension in $H$ if and only if $choice=\emptyset$ and $tabu=\emptyset$. 
\end{proposition}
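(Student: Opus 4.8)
The plan is to observe that, under the stated hypotheses, the four sets $S$, $S^+$, $choice$, and $tabu$ form a partition of $A$, and then to read the biconditional off directly from the defining condition $S^+ = A \setminus S$ of a stable extension. Because both directions of the equivalence fall out at once from this partition, I would not argue the two implications separately.

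First I would record the disjointness relations. The hypothesis $S \cap S^+ = \emptyset$ says $S$ and $S^+$ are disjoint; the containment $choice \subseteq A \setminus (S \cup S^+ \cup S^-)$ forces $choice$ to be disjoint from both $S$ and $S^+$; and the definition $tabu = A \setminus (S \cup S^+ \cup choice)$ makes $tabu$ disjoint from each of $S$, $S^+$, and $choice$. Hence the four sets are pairwise disjoint. For the covering, note $S \cup S^+ \cup choice \subseteq A$, so $(S \cup S^+ \cup choice) \cup tabu = A$ by the very definition of $tabu$. Thus $A$ decomposes as the disjoint union $S \sqcup S^+ \sqcup choice \sqcup tabu$.

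Next I would use that $S^+ \subseteq A$ (every member of $S^+$ is an argument) together with $S \cap S^+ = \emptyset$ to conclude $S^+ \subseteq A \setminus S$. Removing $S$ from the partition then gives $A \setminus S = S^+ \sqcup choice \sqcup tabu$. Consequently $S^+ = A \setminus S$ holds if and only if $choice \sqcup tabu = \emptyset$, which, since this is a disjoint union, is equivalent to $choice = \emptyset$ and $tabu = \emptyset$. Combining this with the definition of a stable extension closes the argument.

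I expect no serious obstacle here: the statement is essentially a bookkeeping consequence of how $choice$ and $tabu$ are defined relative to $S$ and $S^+$. The one point that requires genuine care is confirming that the four sets really do partition $A$ (in particular that nothing is double-counted and nothing is omitted), since the whole equivalence rests on rewriting $A \setminus S$ cleanly as $S^+ \cup choice \cup tabu$.
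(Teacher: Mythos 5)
Your proof is correct and rests on the same bookkeeping as the paper's: the pairwise disjointness of $S$, $S^+$, $choice$, and $tabu$, together with the definition $tabu = A \setminus (S \cup S^+ \cup choice)$, which forces the four sets to cover $A$. The only difference is organizational --- you package these facts as an explicit partition of $A$ and read off both directions of the biconditional at once, whereas the paper verifies the two implications separately by direct manipulation of the same identities --- so the mathematical content is identical.
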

\begin{proof}	
	\renewcommand\qedsymbol{}
	If $tabu=\emptyset$ and $choice=\emptyset$, then $\emptyset = A\setminus (S\cup S^+\cup \emptyset)$ because $tabu = A\setminus (S\cup S^+\cup choice)$. Thus, $S\cup S^+ = A$. As $S \cap S^+ = \emptyset$, then $S^+ = A\setminus S$. On the other hand, if $S$ is stable, then $S^+=A \setminus S$ and so $S \cup S^+ =A$. Thus, $tabu= A \setminus (S \cup S^+ \cup choice)=A\setminus (A\cup choice)=\emptyset$. Further, $choice=\emptyset$ since $choice \subseteq A \setminus (S\cup S^+ \cup S^-)$ and $S \cup S^+=A$.\placeqed
\end{proof}
 
 Now, we present three propositions that are essential for efficiently listing stable extensions. These propositions are inspired by the excellent work of \cite{DBLP:conf/cade/DoutreM01}, which presented a backtracking algorithm for solving a different computational problem in {\sc af}s.
 \numberwithin{equation}{proposition}
\begin{proposition}
 	Let $H=(A,R)$ be an {\sc af}, $S\subseteq A$, $S \cap S^+ = \emptyset$, $choice \subseteq A\setminus (S \cup S^+\cup S^-)$, $tabu = A\setminus (S\cup S^+\cup choice)$, and $x \in choice$ be an argument with $\{x\}^- \subseteq S^+ \cup tabu$. If there is a stable extension $T\supseteq S$ such that $T \setminus S \subseteq choice$, then $x \in T$.	
\end{proposition}	
\begin{proof}
	\renewcommand\qedsymbol{}
	Suppose $x \notin T$. Then, $x \in T^+$ because $T$ is stable. Thus, 
	\begin{equation}
	\exists y \in \{x\}^-\text{ such that }y \in T.
	\end{equation}
As $\{x\}^- \subseteq S^+ \cup tabu$,
	\begin{equation}
	\forall y \in \{x\}^-\text{ it is the case that }y \in S^+ \cup tabu.
	\end{equation}
Hence, (2.1) together with (2.2) imply that
\begin{equation}
T \cap (S^+ \cup tabu) \neq \emptyset.
\end{equation}
Observe, $T \subseteq S \cup choice$ since $T \supseteq S$ and $T\setminus S \subseteq choice$. As $S \cap (S^+\cup tabu)=\emptyset$ and $choice \cap (S^+\cup tabu)=\emptyset$, it holds that 
\begin{equation}
T \cap (S^+ \cup tabu) =\emptyset.
\end{equation}
Note the contradiction between (2.3) and (2.4). \placeqed
\end{proof}
\begin{proposition}
	Let $H=(A,R)$ be an {\sc af}, $S\subseteq A$, $S \cap S^+ = \emptyset$, $choice \subseteq A\setminus (S \cup S^+\cup S^-)$, $tabu = A\setminus (S\cup S^+\cup choice)$, and $x \in choice$ be an argument such that for some $y \in tabu$ it is the case that $\{y\}^- \cap choice=\{x\}$. If there is a stable extension $T\supseteq S$ such that $T \setminus S \subseteq choice$, then $x \in T$. 	
\end{proposition}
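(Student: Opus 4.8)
The plan is to locate, inside the stable extension $T$, an attacker of the tabu argument $y$, and then argue that this attacker is forced to be $x$. First I would record that $T \subseteq S \cup choice$, which follows from $T \supseteq S$ together with $T \setminus S \subseteq choice$, exactly as in the proof of Proposition 2.2. Since $y \in tabu = A \setminus (S \cup S^+ \cup choice)$, we have in particular $y \notin S$ and $y \notin choice$, hence $y \notin S \cup choice$ and therefore $y \notin T$.

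Because $T$ is stable, $T^+ = A \setminus T$, so $y \notin T$ gives $y \in T^+$. By the definition of $T^+$ this means there is some $z \in T$ with $(z,y) \in R$, that is, $z \in \{y\}^-$ and $z \in T$. Using $T \subseteq S \cup choice$ once more, either $z \in S$ or $z \in choice$. If $z \in S$, then $(z,y) \in R$ forces $y \in S^+$ by the definition of $S^+$, contradicting $y \in tabu$ (which yields $y \notin S^+$); so this case is impossible. Hence $z \in choice$, and then $z \in \{y\}^- \cap choice = \{x\}$, so $z = x$. Since $z \in T$, I conclude $x \in T$, as required.

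The main obstacle is not the algebra but identifying the right witness: one must first see that $y$ itself cannot lie in $T$, so that stability forces $T$ to contain an attacker of $y$, and only then exploit the hypothesis that $choice$ contributes exactly one attacker of $y$ in order to pin this attacker down as $x$. The step that actually carries weight is the elimination of the case $z \in S$, where the attacker would come from the already-committed part of the extension; ruling it out is precisely where the property $y \notin S^+$ is used, mirroring the way Proposition 2.2 discards the analogous possibility. If one prefers, the whole argument can instead be phrased as a proof by contradiction beginning with the assumption $x \notin T$, in direct parallel to Proposition 2.2, but the direct version above seems cleaner.
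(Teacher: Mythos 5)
Your proof is correct and uses exactly the same ingredients as the paper's: $y\notin T$ because $T\subseteq S\cup choice$ is disjoint from $tabu$, stability of $T$ forces an attacker of $y$ inside $T$, the fact $y\notin S^+$ rules out an attacker in $S$, and the hypothesis $\{y\}^-\cap choice=\{x\}$ pins the attacker down to $x$. The only difference is presentational — you argue directly while the paper assumes $x\notin T$ and derives a contradiction from $\{y\}^-\cap(choice\setminus\{x\})\neq\emptyset$ — which is an immaterial rephrasing of the same argument.
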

\begin{proof}\renewcommand\qedsymbol{}
	Suppose $x \notin T$. As $x \in choice$,
	\begin{equation}
	T \setminus S \subseteq choice \setminus \{x\}. 
	\end{equation}
As $T\supseteq S$ and $T\setminus S \subseteq choice$, it is the case that $T\subseteq S \cup choice$. Because $tabu \cap (S\cup choice)=\emptyset$, it holds that
\begin{equation}
T \cap tabu=\emptyset.
\end{equation}
Referring to the premise of this proposition, as $y \in tabu$, $y \notin T$. Subsequently, $y \in T^+$ since $T$ is stable. Thus, 
\begin{equation}
\{y\}^- \cap T \not = \emptyset.
\end{equation}
Since $y \in tabu$ and $tabu \cap S^+=\emptyset$, $y \notin S^+$. Thus, \begin{equation}\{y\}^- \cap S = \emptyset.\end{equation} 
Due to (3.1), (3.3), and (3.4), it holds that $\{y\}^- \cap (choice \setminus \{x\}) \not =\emptyset$, which is a contradiction with $\{y\}^- \cap choice= \{x\}$, see the premise of this proposition.\placeqed
\end{proof}
\begin{proposition}
	Let $H=(A,R)$ be an {\sc af}, $S\subseteq A$, $S \cap S^+=\emptyset$, $choice \subseteq A\setminus (S \cup S^+\cup S^-)$, $tabu = A\setminus (S\cup S^+\cup choice)$, and $x \in tabu$ be an argument with $\{x\}^- \subseteq S^+ \cup tabu$. Then, there does not exist a stable extension $T \supseteq S$ such that $T\setminus S \subseteq choice$. 	
\end{proposition}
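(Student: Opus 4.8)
The plan is to argue by contradiction, closely mirroring the structure used in Proposition~2. I would suppose, for the sake of contradiction, that there does exist a stable extension $T \supseteq S$ with $T \setminus S \subseteq choice$, and then derive a contradiction from the hypothesis that $x \in tabu$ with $\{x\}^- \subseteq S^+ \cup tabu$.

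First I would establish that $x \notin T$. This follows immediately because $x \in tabu$, and since $T \subseteq S \cup choice$ (which holds because $T \supseteq S$ and $T \setminus S \subseteq choice$), while $tabu \cap (S \cup choice) = \emptyset$ by the definition of $tabu$, we get $T \cap tabu = \emptyset$ and hence $x \notin T$. Since $T$ is stable, $x \notin T$ forces $x \in T^+$, which means some attacker of $x$ lies in $T$; that is, $\{x\}^- \cap T \neq \emptyset$.

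Next I would use the hypothesis $\{x\}^- \subseteq S^+ \cup tabu$ to constrain where this attacker can live. Combining $\{x\}^- \cap T \neq \emptyset$ with $\{x\}^- \subseteq S^+ \cup tabu$ yields $T \cap (S^+ \cup tabu) \neq \emptyset$. But as in Proposition~2, since $T \subseteq S \cup choice$ and both $S \cap (S^+ \cup tabu) = \emptyset$ and $choice \cap (S^+ \cup tabu) = \emptyset$ hold (the former because $S \cap S^+ = \emptyset$ and $tabu$ is disjoint from $S$, the latter because $choice \subseteq A \setminus (S \cup S^+ \cup S^-)$ and $tabu$ is disjoint from $choice$), we obtain $T \cap (S^+ \cup tabu) = \emptyset$. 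This is the desired contradiction, so no such stable extension $T$ can exist.

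I do not expect a genuine obstacle here, since the argument is structurally almost identical to Proposition~2; the only real difference is that $x$ starts in $tabu$ rather than in $choice$, which is precisely what immediately gives $x \notin T$ without needing a case assumption. The one point to handle carefully is the verification of the two disjointness facts $S \cap (S^+ \cup tabu) = \emptyset$ and $choice \cap (S^+ \cup tabu) = \emptyset$, which rest on the standing hypotheses relating $S$, $choice$, and $tabu$; these are routine but worth stating explicitly so the final contradiction is airtight.
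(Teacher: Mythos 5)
Your proof is correct and follows essentially the same route as the paper's: both argue by contradiction, derive $x \notin T$ from the disjointness of $T$ with $tabu$, use stability to get $\{x\}^- \cap T \neq \emptyset$, and then contradict the disjointness $T \cap (S^+ \cup tabu) = \emptyset$ obtained from $T \subseteq S \cup choice$. The only cosmetic difference is ordering — the paper establishes the full disjointness fact (its equation (4.1)) up front and reuses it, whereas you derive $x \notin T$ from the $tabu$ part first and assemble the full disjointness at the end — and your explicit verification of the two disjointness facts is a fair, slightly more detailed rendering of the paper's one-line claim that $(S \cup choice) \cap (S^+ \cup tabu) = \emptyset$.
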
	
\begin{proof}\renewcommand\qedsymbol{}
	Assume there exists a stable extension $T \supseteq S$ with $T\setminus S \subseteq choice$. Since $(S\cup choice) \cap (S^+ \cup tabu) =\emptyset$, 
	\begin{equation}
	T \cap (S^+ \cup tabu)=\emptyset.
	\end{equation}
	As $x \in tabu$ and due to (4.1), $x \notin T$. Thus, $x \in T^+$ because $T$ is stable. Subsequently, 
	\begin{equation}
	T \cap \{x\}^- \neq \emptyset.
	\end{equation}
	Because $\{x\}^- \subseteq S^+ \cup tabu$,
	\begin{equation}
T \cap (S^+ \cup tabu) \neq \emptyset
	\end{equation}
	 Note the contradiction between (4.1) and (4.3).\placeqed
\end{proof}
We now give Algorithm 1 for listing all stable extensions in a given {\sc af}. Before presenting its proof, we demonstrate an execution of Algorithm~1.

\begin{algorithm}
	\caption{stb($S$, $choice$, $tabu$)}
	\Repeat{$\alpha = \emptyset$ and $\beta =\emptyset$}{
		\textbf{if} there is $x \in tabu$ with $\{x\}^- \subseteq S^+ \cup tabu$ \textbf{then} return\;	
		$\alpha \gets \{x \in choice \mid \{x\}^- \subseteq S^+\cup tabu\}$\;		
		$S \gets S\cup \alpha$\;
		$choice \gets choice \setminus (\alpha \cup \alpha^+ \cup \alpha ^-)$\;
		$tabu \gets (tabu \cup \alpha^-) \setminus S^+$\;
		\textbf{if} $\neg (\exists y \in choice, \exists x \in tabu, \{x\}^- \cap choice=\{y\})$ \textbf{then} $\beta \gets \emptyset$ \textbf{else}\\
		\hspace{12pt}$\beta \gets \{y\}$ \;
		\hspace{12pt}$S \gets S\cup \beta$\;
		\hspace{12pt}$choice \gets choice \setminus (\beta \cup \beta^+ \cup \beta^-)$\;
		\hspace{12pt}$tabu \gets (tabu \cup \beta^-) \setminus S^+$\;
		
	}	
	\If{$choice=\emptyset$}{
		\textbf{if} $tabu=\emptyset$ \textbf{then} $S$ is stable\;	
		return\;		
	}
	\tcp{For some $x \in choice$.}
	stb($S\cup \{x\}$, $choice\setminus (\{x\}\cup \{x\}^+\cup \{x\}^-)$, $(tabu \cup \{x\}^-)\setminus (S^+\cup \{x\}^+)$)\;
	stb($S$, $choice \setminus \{x\}$, $tabu \cup \{x\}$)\;
	
\end{algorithm} 
\vspace{3mm}
\noindent \textbf{Example 2.} Apply Algorithm 1 to list the stable extensions in $H_1$:
\begin{enumerate}
	\item Initially, call $stb(\emptyset,\{a,b,c,d,e,f\},\emptyset)$.
	\item Perform the repeat-until block of Algorithm 1 to get $S=\emptyset$, $choice=\{a,b,c,d,e,f\}$, and $tabu=~\emptyset$.
	\item Call $stb(\{a\},\{c,d\},\{e,f\})$, see line 16 in the algorithm.
	\item Now, execute the repeat-until block to get $S=\{a,c,d\}$, $choice=\emptyset$, and $tabu=~\emptyset$. Applying line 14, $S$ is stable. Now, apply line 15 and return to the state: $S=\emptyset$, $choice=\{a,b,c,d,e,f\}$, and $tabu=~\emptyset$.
	\item Call $stb(\emptyset,\{b,c,d,e,f\},\{a\})$, see line 17 in the algorithm.
	\item After performing the repeat-until block, $S=\emptyset$, $choice=\{b,c,d,e,f\}$, and $tabu=\{a\}$.
	\item Call $stb(\{b\},\{e,f\},\{a\})$, see line 16.
	\item Apply the repeat-until block to get $S=\{b,e\}$, $choice=\emptyset$, and  $tabu=\emptyset$. Performing line 14, we find $S$ stable. Now apply line 15, and so return to the state: $S=\emptyset$, $choice=\{b,c,d,e,f\}$, and $tabu=\{a\}$.  
	\item Call $stb(\emptyset,\{c,d,e,f\},\{a,b\})$, see line 17.
	\item Perform a first round of the repeat-until block. Thus, $S=\{d\}$, $choice=\{c\}$, and $tabu=\{a\}$. In a second round of the repeat-until block, the algorithm returns (see line 2), and eventually the algorithm halts.	
\end{enumerate}

In what follows we give another set of propositions that together with the previous ones will establish the validity of Algorithm~1. Thus, we denote by $T_i$ the elements of a set $T$ at the algorithm's state $i$. The algorithm enters a new state whenever $S$ is updated. In other words,  the algorithm enters a new state whenever line 4, 9, 16, or 17 are executed. Focusing on the under-construction stable extension $S$ and consistently with the algorithm's actions, in the initial state of the algorithm we let
\setcounter{equation}{0}
\counterwithout{equation}{proposition}
\begin{equation}
S_{1}=\emptyset,
\end{equation} 
and for every state $i$ it is the case that
\begin{equation}
\begin{array}{l}
S_{i+1}=S_{i}\text{, or }\\
S_{i+1}= S_{i} \cup \delta_{i} \text{ with } \delta_{i} \in \{\alpha_{i},\beta_{i},\{x\}_{i}\};
\end{array}
\end{equation}
see respectively lines 17, 4, 9, and 16 in the algorithm.

 \numberwithin{equation}{proposition}
\begin{proposition}
	Let $H=(A,R)$ be an {\sc af} and Algorithm 1 be started with \begin{center}
		stb$(\emptyset, A \setminus \{x\mid (x,x)\in R\}, \{x\mid (x,x)\in R\})$.
	\end{center} 
For every state $i$, $S_i \cap S_i^+ = \emptyset$.
\end{proposition}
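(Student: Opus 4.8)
The plan is to prove the claim by induction on the state index $i$, tracking precisely how $S$ changes between consecutive states. The base case is immediate: under the stated initialization $S_1=\emptyset$, so $S_1^+=\emptyset$ and hence $S_1\cap S_1^+=\emptyset$. For the inductive step I would assume $S_i\cap S_i^+=\emptyset$ and split on how state $i+1$ arises. By the recalled update rule (2.2), either $S_{i+1}=S_i$ (line 17), in which case the property is inherited verbatim, or $S_{i+1}=S_i\cup\delta_i$ with $\delta_i\in\{\alpha_i,\beta_i,\{x\}_i\}$ (lines 4, 9, 16). So the whole argument reduces to the single-addition case.

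For that case I would expand, using $S_{i+1}^+=S_i^+\cup\delta_i^+$,
\begin{equation*}
S_{i+1}\cap S_{i+1}^+=(S_i\cap S_i^+)\cup(S_i\cap\delta_i^+)\cup(\delta_i\cap S_i^+)\cup(\delta_i\cap\delta_i^+),
\end{equation*}
and show that each of the four pieces is empty. The first is the induction hypothesis. The terms $\delta_i\cap S_i^+$ and $S_i\cap\delta_i^+$ both vanish once I know that the arguments just added are drawn from the current $choice$ and that this $choice$ is disjoint from $S_i^+$ and from $S_i^-$: indeed $\delta_i\subseteq choice$ gives $\delta_i\cap S_i^+=\emptyset$ at once, while any $s\in S_i\cap\delta_i^+$ would be attacked by some $d\in\delta_i\subseteq choice$, forcing $d\in S_i^-$, a contradiction. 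Thus the crux is the last term, the internal conflict-freeness $\delta_i\cap\delta_i^+=\emptyset$ of the added set.

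To supply the disjointness and conflict-freeness facts I would carry, alongside the main induction, the auxiliary invariants that the current $choice$ is always disjoint from $S^+\cup S^-$ and from $tabu$, and that it contains no self-attacking argument. The last is cheap, since every update to $choice$ is a set difference (lines 5, 10, 16, 17), so $choice$ only shrinks, and the initial value $A\setminus\{x\mid(x,x)\in R\}$ already excludes self-attackers; the disjointness parts are maintained by inspecting the same update lines together with lines 6 and 11. With these in hand the singleton additions $\{x\}_i$ and $\beta_i=\{y\}$ are conflict-free, because a singleton $\{z\}$ meets its own attack set only when $(z,z)\in R$, which is excluded for any $z\in choice$.

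The step I expect to be the main obstacle is the conflict-freeness of the set $\alpha_i$ added in one shot at line 4. Here I would use its defining property: if $x_1,x_2\in\alpha_i$ with $(x_1,x_2)\in R$, then $x_1\in\{x_2\}^-\subseteq S_i^+\cup tabu$ by the membership test defining $\alpha_i$, yet $x_1\in\alpha_i\subseteq choice$ and $choice$ is disjoint from $S_i^+\cup tabu$ --- a contradiction, so $\alpha_i\cap\alpha_i^+=\emptyset$. Assembling the four empty intersections then yields $S_{i+1}\cap S_{i+1}^+=\emptyset$ and closes the induction. The only delicate bookkeeping is threading the auxiliary $choice$/$tabu$ invariants correctly through the two-phase repeat body (the $\alpha$-phase of lines 4--6 followed by the $\beta$-phase of lines 9--11), since $\beta_i$ is selected and tested against the $choice$ and $tabu$ already refreshed by the $\alpha$-phase, so its disjointness must be read off the post-$\alpha$ values rather than the values at the start of the loop iteration.
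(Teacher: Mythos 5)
Your proposal is correct and takes essentially the same route as the paper's own proof: induction on the algorithm's states, the identical four-term expansion of $S_{i+1}\cap S_{i+1}^+$, and the same auxiliary invariants carried through the induction ($choice_i$ disjoint from $S_i^+$, from $S_i^-$ --- which the paper states in the equivalent dual form $S_i\cap choice_i^+=\emptyset$ --- and from $tabu_i$, plus absence of self-attackers in $choice_i$), with the crucial $\alpha_i$ case settled exactly as in the paper by playing $\{x\}^-\subseteq S_i^+\cup tabu_i$ against $choice_i\cap(S_i^+\cup tabu_i)=\emptyset$. The only cosmetic differences are your explicit monotonicity argument for why $choice$ never contains self-attackers (the paper merely asserts this) and your remark about reading $\beta_i$'s disjointness off the post-$\alpha$ values, both of which the paper leaves implicit.
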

\begin{proof}\renewcommand\qedsymbol{}
	As $S_1=\emptyset$, $S_{1} \cap S_{1}^+=\emptyset$. We now show that for every state $i$
	\begin{equation}
	S_{i} \cap S_{i}^+=\emptyset \implies S_{i+1} \cap S_{i+1}^+=\emptyset.
	\end{equation}
	Suppose $S_{i} \cap S_{i}^+=\emptyset$. Using (2) we write
	\begin{equation}
	\begin{array}{ll}
	S_{i+1} \cap S_{i+1}^+&= (S_{i} \cup \delta_{i}) \cap (S_{i}^+ \cup \delta_{i}^{+})\\
	&=(S_i \cap S_i^+) \cup (S_i\cap \delta_{i}^+) \cup (S_i^+ \cap \delta_{i})  \cup (\delta_{i} \cap \delta_{i}^+).
	\end{array}
	\end{equation}	
	 In fact, $\delta_{i} \subseteq choice_i$, recall (2) and lines 3, 7, and 16. Considering (5.2), we proceed the proof by showing that
	 \begin{equation}	 
	 (S_i \cap S_i^+) \cup (S_i\cap choice_{i}^+) \cup (S_i^+ \cap choice_{i})  \cup (\delta_{i} \cap \delta_{i}^+) = \emptyset.	
	 \end{equation}
	 In other words, we need to prove that for every state $i$
	 \begin{equation}
	 S_i\cap choice_{i}^+=\emptyset,
	 \end{equation}
	\begin{equation}
	S_i^+ \cap choice_i =\emptyset,
	\end{equation}
	\begin{equation}
	\delta_{i} \cap \delta_{i}^+ = \emptyset.
	\end{equation}
	Now we prove (5.4). For $i=1$, $S_1 \cap choice_1^+=\emptyset$ since $S_1=\emptyset$. Then, we will show that
	\begin{equation}
	\forall i~ (S_i \cap choice_i^+ =\emptyset \implies S_{i+1} \cap choice_{i+1}^+= \emptyset).
	\end{equation}	
	Suppose $S_i \cap choice_i^+ = \emptyset$.  Referring to lines 5, 10 and 16, it holds that $choice_{i+1}=choice_i \setminus (\delta_i\cup \delta_i^- \cup \delta_i^+)$. And using (2), we note that
	\begin{equation}
		 S_{i+1} \cap choice_{i+1}^+=(S_i \cup \delta_i) \cap (choice_i \setminus (\delta_i \cup \delta_i^- \cup \delta_i^+))^+.
	\end{equation}
	Since $S_i \cap choice_i^+ = \emptyset$, it holds that $S_i  \cap (choice_i \setminus (\delta_i \cup \delta_i^- \cup \delta_i^+))^+=\emptyset$.
	Additionally, we note that
	\begin{equation}
	\delta_i \cap (choice_i \setminus (\delta_i \cup \delta_i^- \cup \delta_i^+))^+ =\emptyset \iff \delta_i^- \cap (choice_i \setminus (\delta_i \cup \delta_i^- \cup \delta_i^+)) =\emptyset.
	\end{equation}
	As  $\delta_i^- \cap (choice_i \setminus (\delta_i \cup \delta_i^- \cup \delta_i^+)) =\emptyset$, (5.8) holds. For the case of line 17 in the algorithm, where $S_{i+1}=S_i$, note that
	\begin{equation}
	S_{i+1} \cap choice_{i+1}^+  = S_i \cap (choice_i \setminus \{x\}_{i})^+  =\emptyset
	\end{equation}
	since $S_i \cap choice_i^+=\emptyset$. That concludes the proof of (5.4). Now we prove (5.5). For $i=1$, it holds that $choice_1 \cap S_1^+ =\emptyset$ since $S_1=\emptyset$. Then, we need to show that for every state $i$
	\begin{equation}
	choice_i \cap S_i^+=\emptyset \implies choice_{i+1} \cap S_{i+1}^+ =\emptyset.
	\end{equation}
	Referring to line 5, 10, and 16, it is the case that $choice_{i+1} = choice_i \setminus (\delta_{i} \cup \delta_{i}^+ \cup \delta_{i}^-)$. Considering (2), observe that
	\begin{equation}
	\begin{array}{ll}
	choice_{i+1} \cap S_{i+1}^+ &= (choice_i \setminus (\delta_{i} \cup \delta_{i}^+ \cup \delta_{i}^-)) \cap (S_i\cup \delta_{i})^+\\
	&=(choice_i \setminus (\delta_{i} \cup \delta_{i}^+ \cup \delta_{i}^-)) \cap (S_i^+\cup \delta_{i}^+).
	\end{array}
	\end{equation}
	Thus, (5.11) holds. Likewise, for the case of line 17 in the algorithm, 
	\begin{equation}
	choice_{i+1} \cap S_{i+1}^+ =(choice_i \setminus \{x\}_{i} ) \cap S_i^+ =\emptyset
	\end{equation}
	since $choice_i \cap S_i^+=\emptyset$, see the premise of (5.11).
	That completes the proof of (5.5). Now we prove (5.6). From (2), for all $i$, if $S_{i+1}=S_i \cup \delta_i$, then it holds that $\delta_{i} \in \{\alpha_{i},\beta_{i},\{x\}_{i}\}$. Considering the algorithm's actions (lines 3, 7 and 16), $\delta_i \subseteq choice_i$. If $\delta_i \in \{\beta_i,\{x\}_i\}$, then $|\delta_i|=1$, and so $\delta_i \cap \delta_i^+=\emptyset$ holds since $\{x\mid (x,x) \in R\} \not \subseteq choice_i$. If $\delta_i=\alpha_i$, then from line 3 in the algorithm we note that 
	\begin{equation}
	\delta_i = \{x \in choice_i \mid \{x\}^- \subseteq tabu_i \cup S_i^+\}.
	\end{equation}
	Therefore,
	\begin{equation}
	\delta_i^- \subseteq tabu_i \cup S_i^+.
	\end{equation}	
	To establish $\delta_i \cap \delta_i^+ = \emptyset$ it suffices to show that
	 \begin{equation}
	 \delta_i \cap \delta_i^- = \emptyset.
	 \end{equation} 
	Thus, given (5.15) with the fact that $\delta_i \subseteq choice_i $, we will prove (5.16) by showing that
	\begin{equation}
	choice_i \cap (tabu_i \cup S_i^+) = \emptyset.
	\end{equation}
	Because we already showed that $choice_i \cap S_i^+ = \emptyset$, recall (5.5), the focus now is on demonstrating that for all $i$
	\begin{equation}
	choice_i \cap tabu_i = \emptyset.
	\end{equation}
	For $i=1$, (5.18) holds since $choice_1 =A \setminus \{x\mid (x,x) \in R\}$ and $tabu_1 = \{x \mid (x,x) \in R\}$. Now we will show that 
	\begin{equation}
	\forall i~(choice_i \cap tabu_i=\emptyset \implies choice_{i+1} \cap tabu_{i+1}=\emptyset).
	\end{equation}
	Let $choice_i \cap tabu_i=\emptyset$. Observe that
	\begin{equation}
	\begin{array}{ll}
	choice_{i+1} = choice_i \setminus (\delta_i\cup \delta_i^-\cup \delta_i^+) & \text{ (lines 5, 10, and 16)}\\ tabu_{i+1}=(tabu_i\cup \delta_i^-)\setminus (S_i^+\cup \delta_i^+) & \text{ (lines 6, 11, and 16)}
	\end{array}
	\end{equation}
	Thus,
	\begin{equation}
	choice_{i+1} \cap tabu_{i+1}= (choice_i \setminus (\delta_i\cup \delta_i^-\cup \delta_i^+)) \cap ((tabu_i\cup \delta_i^-)\setminus (S_i^+\cup \delta_i^+))=\emptyset.
	\end{equation}
	For the case of line 17, observe that
	\begin{equation}
	choice_{i+1} \cap tabu_{i+1} = (choice_i \setminus \{x\}_i) \cap (tabu_i \cup \{x\}_i)=\emptyset.
	\end{equation}
	That concludes the demonstration of (5.6), and so the proof of (5.1) is now complete.\placeqed
\end{proof}
\begin{proposition}
	Let $H=(A,R)$ be an {\sc af} and Algorithm 1 be started with \begin{center}
		stb$(\emptyset, A \setminus \{x\mid (x,x)\in R\}, \{x\mid (x,x)\in R\})$.
	\end{center} 
For every state $i$, $choice_i \subseteq A \setminus (S_i \cup S_i^+ \cup S_i^-)$.
\end{proposition}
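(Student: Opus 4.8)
The plan is to prove the statement by induction on the state $i$, mirroring the style of the proof of Proposition 5. First I would note that $choice_i \subseteq A\setminus(S_i\cup S_i^+\cup S_i^-)$ is equivalent to the conjunction of the three disjointness conditions $choice_i\cap S_i=\emptyset$, $choice_i\cap S_i^+=\emptyset$, and $choice_i\cap S_i^-=\emptyset$. The middle condition is already available for free: it is exactly (5.5) of Proposition 5, established there for every state $i$. Hence it remains only to handle $choice_i\cap S_i=\emptyset$ and $choice_i\cap S_i^-=\emptyset$, and the desired inclusion then follows by intersecting the three.

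For the base case $i=1$, recall $S_1=\emptyset$ by (1), so $S_1=S_1^+=S_1^-=\emptyset$ and both remaining conditions hold trivially; indeed $choice_1=A\setminus\{x\mid(x,x)\in R\}\subseteq A$. For the inductive step I would split according to how $S$ is updated, using (2). In the case of line 17 we have $S_{i+1}=S_i$ and $choice_{i+1}=choice_i\setminus\{x\}_i\subseteq choice_i$; since $S_{i+1}=S_i$ gives $S_{i+1}^-=S_i^-$, both conditions for $i+1$ follow from the inductive hypothesis by monotonicity of $choice$. In the case of lines 4, 9, and 16 we have $S_{i+1}=S_i\cup\delta_i$ with $\delta_i\in\{\alpha_i,\beta_i,\{x\}_i\}$, and $choice_{i+1}=choice_i\setminus(\delta_i\cup\delta_i^+\cup\delta_i^-)$ (lines 5, 10, and 16). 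Using the distributivity $S_{i+1}^-=S_i^-\cup\delta_i^-$, I would argue that $choice_{i+1}\cap S_i=\emptyset$ and $choice_{i+1}\cap S_i^-=\emptyset$ hold because $choice_{i+1}\subseteq choice_i$ together with the inductive hypothesis, while $choice_{i+1}\cap\delta_i=\emptyset$ and $choice_{i+1}\cap\delta_i^-=\emptyset$ hold because the update explicitly deletes $\delta_i$ and $\delta_i^-$ from $choice_i$. Combining these yields $choice_{i+1}\cap S_{i+1}=\emptyset$ and $choice_{i+1}\cap S_{i+1}^-=\emptyset$, closing the induction.

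I do not expect a genuine obstacle here; the argument is bookkeeping over the set-subtraction form of lines 5, 10, 16, and 17. The two points worth care are, first, that disjointness from $S_i^+$ is not re-derived but inherited directly from (5.5), and second, that disjointness from the ``old'' parts $S_i$ and $S_i^-$ survives each update only because $choice$ never grows (so $choice_{i+1}\subseteq choice_i$), whereas disjointness from the ``new'' parts $\delta_i$ and $\delta_i^-$ is guaranteed precisely by the explicit removals performed at every updating line.
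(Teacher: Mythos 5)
Your proof is correct and takes essentially the same route as the paper's: induction on the algorithm's states, with the same case split between updates via lines 4/9/16 (where $choice_{i+1}=choice_i\setminus(\delta_i\cup\delta_i^+\cup\delta_i^-)$ and $S_{i+1}=S_i\cup\delta_i$, hence $S_{i+1}^-=S_i^-\cup\delta_i^-$) and the update via line 17 (where $S_{i+1}=S_i$ and $choice$ only shrinks). Your rephrasing of the inclusion as three disjointness conditions, importing the $S_i^+$ part from (5.5) of Proposition~5, is only a cosmetic variant of the paper's single set-theoretic computation in (6.2)--(6.4), which handles all three parts at once; indeed, the same bookkeeping you use for $\delta_i$ and $\delta_i^-$ would cover $\delta_i^+$ as well, making the appeal to Proposition~5 unnecessary.
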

\begin{proof}\renewcommand\qedsymbol{}
	 We note that $choice_1 \subseteq A \setminus (S_1 \cup S_1^+ \cup S_1^-)$. This is because $choice_1=A \setminus \{x \mid (x,x) \in R\}$ and, $S_1=\emptyset$. Now, we show that for every state $i$ 
	\begin{equation}
	choice_i \subseteq A \setminus (S_i \cup S_i^+ \cup S_i^-) \implies choice_{i+1} \subseteq A \setminus (S_{i+1} \cup S_{i+1}^+ \cup S_{i+1}^-).
	\end{equation}
	Given the premise of (6.1), we write
	\begin{equation}
	choice_i \setminus (\delta_{i} \cup \delta_{i}^- \cup \delta_{i}^+) \subseteq A \setminus (S_i \cup S_i^+ \cup S_i^- \cup \delta_{i} \cup \delta_{i}^- \cup \delta_{i}^+).
	\end{equation}
	According to lines 5, 10 and 16 in the algorithm we observe that
	\begin{equation}
	choice_{i+1} = choice_i \setminus (\delta_{i} \cup \delta_{i}^+ \cup \delta_{i}^-).
	\end{equation}	
	Hence, using (6.3) along with (2) we rewrite (6.2) as
	\begin{equation}
	choice_{i+1} \subseteq A \setminus (S_{i+1} \cup S_{i+1}^+ \cup S_{i+1}^-),
	\end{equation}
	which is the consequence of (6.1). But to complete the proof of (6.1) we need to illustrate the case of line 17 in the algorithm where for some $x \in choice_i$ it holds that
	\begin{equation}
	choice_{i+1} = choice_i \setminus \{x\}_i,
	\end{equation}
	and
	\begin{equation}
	S_{i+1} = S_i.
	\end{equation}
Given the premise of (6.1), observe that
	\begin{equation}
	choice_i \setminus \{x\}_i \subseteq A \setminus (S_i \cup S_i^+ \cup S_i^-).
	\end{equation}
	Using (6.5) with (6.6), we note that $choice_{i+1} \subseteq A \setminus (S_{i+1} \cup S_{i+1}^+ \cup S_{i+1}^-)$.\placeqed	 
\end{proof}
\begin{proposition}
	Let $H=(A,R)$ be an {\sc af} and Algorithm 1 be started with \begin{center}
		stb$(\emptyset, A \setminus \{x\mid (x,x)\in R\}, \{x\mid (x,x)\in R\})$.
	\end{center} 
For every state $i$, $tabu_i = A \setminus (S_i \cup S_i^+ \cup choice_i)$.
\end{proposition}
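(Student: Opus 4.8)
The plan is to prove the identity by induction on the state index $i$, exactly as in the two preceding propositions, taking as induction hypothesis $tabu_i = A \setminus (S_i \cup S_i^+ \cup choice_i)$ and combining it freely with Proposition 5 ($S_i \cap S_i^+ = \emptyset$, together with its internal facts (5.5), (5.16), (5.18)) and Proposition 6 ($choice_i \subseteq A \setminus (S_i \cup S_i^+ \cup S_i^-)$). These results jointly say that at every state the sets $S_i$, $S_i^+$, $choice_i$, $tabu_i$ partition $A$, and this partition is the workhorse of the argument. The base case $i = 1$ is immediate: substituting $S_1 = \emptyset$ and $choice_1 = A \setminus \{x \mid (x,x) \in R\}$ gives $A \setminus (S_1 \cup S_1^+ \cup choice_1) = A \setminus (A \setminus \{x \mid (x,x) \in R\}) = \{x \mid (x,x) \in R\} = tabu_1$.

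For the inductive step I would split on how the algorithm passes from state $i$ to state $i+1$, as recorded in (2). The line-17 branch is easy: there $S_{i+1} = S_i$, $choice_{i+1} = choice_i \setminus \{x\}_i$, and $tabu_{i+1} = tabu_i \cup \{x\}_i$ for some $x \in choice_i$; since $x \notin S_i \cup S_i^+$ (Proposition 6), deleting $x$ from $choice_i$ removes exactly $x$ from $S_i \cup S_i^+ \cup choice_i$, so its complement gains exactly $x$, whence $A \setminus (S_{i+1} \cup S_{i+1}^+ \cup choice_{i+1}) = tabu_i \cup \{x\}_i = tabu_{i+1}$. The remaining branch covers lines 4--6, 9--11, and 16, where $\delta_i \subseteq choice_i$, $S_{i+1} = S_i \cup \delta_i$, $S_{i+1}^+ = S_i^+ \cup \delta_i^+$, $choice_{i+1} = choice_i \setminus (\delta_i \cup \delta_i^+ \cup \delta_i^-)$, and the algorithm sets $tabu_{i+1} = (tabu_i \cup \delta_i^-) \setminus (S_i^+ \cup \delta_i^+)$. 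Here I would expand the complement using the induction hypothesis and the disjointness facts, first simplifying the inner union $S_{i+1} \cup S_{i+1}^+ \cup choice_{i+1}$ to $S_i \cup S_i^+ \cup \delta_i^+ \cup (choice_i \setminus \delta_i^-)$, and then checking that its complement coincides with $(tabu_i \cup \delta_i^-) \setminus (S_i^+ \cup \delta_i^+)$.

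The main obstacle is precisely this last branch, and specifically the treatment of $\delta_i^-$, the attackers of the newly admitted arguments $\delta_i$: the update moves such attackers into $tabu$ while simultaneously expelling $\delta_i^+$ from $tabu$, and I must show this matches the complement without any double counting. The delicate cases are arguments that lie in more than one of $\delta_i$, $\delta_i^+$, $\delta_i^-$, $S_i$, $tabu_i$, and they are all resolved by auxiliary disjointnesses: $\delta_i \cap \delta_i^+ = \emptyset$ and $\delta_i \cap \delta_i^- = \emptyset$ (so members of $\delta_i$ are never accidentally re-expelled), $choice_i \cap tabu_i = \emptyset$ and $choice_i \cap S_i^+ = \emptyset$ from the partition, and $S_i \cap \delta_i^- = \emptyset$, which I would prove in one line: if some $s \in S_i$ attacked a $d \in \delta_i \subseteq choice_i$, then $d \in S_i^+$, contradicting (5.5). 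With these facts in hand, membership of each argument in $(tabu_i \cup \delta_i^-) \setminus (S_i^+ \cup \delta_i^+)$ agrees exactly with its exclusion from $S_{i+1} \cup S_{i+1}^+ \cup choice_{i+1}$, which completes the induction.
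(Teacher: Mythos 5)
Your proof is correct, and its skeleton---induction over the algorithm's states, the same base case, and a separate, essentially identical treatment of the line-17 transition---coincides with the paper's. Where you genuinely diverge is in the main transition (lines 4--6, 9--11, 16). The paper discharges this step by a case analysis on where $\delta_i^+ \cup \delta_i^-$ sits: first $\delta_i^+ \cup \delta_i^- \subseteq choice_i$, then $(\delta_i^+\cup\delta_i^-)\cap choice_i=\emptyset$, and finally the mixed case, which it handles by splitting $\delta_i^+$ and $\delta_i^-$ into the pieces $\delta_{it}^{\pm},\delta_{ic}^{\pm},\delta_{is}^{\pm}$ lying in $tabu_i$, $choice_i$, $S_i^+$ respectively, followed by a long chain of displayed set identities ((7.5)--(7.10)). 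You instead give a single uniform computation: reduce $S_{i+1}\cup S_{i+1}^+\cup choice_{i+1}$ to $S_i\cup S_i^+\cup\delta_i^+\cup(choice_i\setminus\delta_i^-)$ (valid precisely because $\delta_i\subseteq choice_i$ and $\delta_i\cap\delta_i^-=\emptyset$), then match its complement against $(tabu_i\cup\delta_i^-)\setminus(S_i^+\cup\delta_i^+)$ cell by cell over the partition $A=S_i\cup S_i^+\cup choice_i\cup tabu_i$, using $\delta_i^-\cap S_i=\emptyset$; your one-line argument for that last fact is sound, and the other facts you import ((5.5), (5.16), (5.18), Proposition 6) are established independently of Proposition 7, so there is no circularity. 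Your route subsumes the paper's three cases in one argument: on both sides of the identity, an element of $tabu_i$ or of $choice_i\cap\delta_i^-$ lands in $tabu_{i+1}$ exactly when it avoids $\delta_i^+$, and nothing else survives. What the paper's version buys is a proof by displayed equational rewriting that can be checked line by line without element chasing; what yours buys is brevity, explicitness about which disjointness facts are actually needed, and elimination of a case split whose first two cases are special instances of the third.
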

\begin{proof}\renewcommand\qedsymbol{}
	We note that $tabu_1=A\setminus (S_1 \cup S_1^+ \cup choice_1).$ Observe, $S_1=\emptyset$, $choice_1=A\setminus \{x\mid (x,x) \in R\}$ and $tabu_1=\{x\mid (x,x) \in R\}$. Now we prove that for every state $i$,
	\begin{equation}
	tabu_i=A\setminus (S_i \cup S_i^+ \cup choice_i) \implies tabu_{i+1}=A\setminus (S_{i+1} \cup S_{i+1}^+ \cup choice_{i+1}).
	\end{equation}
	Considering lines 6, 11, and 16 we note that
	\begin{equation}
	tabu_{i+1}=(tabu_i \cup \delta_{i}^-) \setminus (S_i^+ \cup \delta_{i}^+).
	\end{equation}
	Given the premise of (7.1), rewrite (7.2) as
	\begin{equation}
	tabu_{i+1}=((A\setminus (S_i \cup S_i^+ \cup choice_i)) \cup \delta_{i}^-) \setminus (S_{i}^+ \cup \delta_{i}^+).\\	
	\end{equation}
	According to lines 5, 10 and 16 in the algorithm we observe that
	\begin{equation}
	choice_{i+1} = choice_i \setminus (\delta_{i} \cup \delta_{i}^+ \cup \delta_{i}^-).
	\end{equation}
	Note that $\delta_i \subseteq choice_i$, see (2). For the case where $\delta_i^+ \cup \delta_i^-  \subseteq choice_i$, using (7.4) and (2), we rewrite (7.3) as
	\begin{equation}
	\begin{array}{ll}
	tabu_{i+1}&=((A\setminus (S_i \cup S_i^+ \cup choice_{i+1} \cup \delta_{i} \cup \delta_{i}^+ \cup \delta_{i}^-)) \cup \delta_{i}^-) \setminus (S_{i}^+ \cup \delta_{i}^+)\\
	&=((A\setminus (S_{i+1} \cup S_{i+1}^+ \cup choice_{i+1} \cup \delta_{i}^-)) \setminus (S_{i}^+ \cup \delta_{i}^+)) \cup (\delta_{i}^- \setminus (S_{i}^+ \cup \delta_{i}^+))\\
	&=((A\setminus (S_{i+1} \cup S_{i+1}^+ \cup choice_{i+1} \cup \delta_{i}^-)) \setminus S_{i+1}^+ )\cup (\delta_{i}^- \setminus S_{i+1}^+ )\\
	&=(A\setminus (S_{i+1} \cup S_{i+1}^+ \cup choice_{i+1} \cup \delta_{i}^-)) \cup (\delta_{i}^- \setminus S_{i+1}^+ )
	\end{array}
	\end{equation}
	Note, it is always the case that $\delta_i^- \cap (choice_{i+1}\cup S_{i+1})=\emptyset$, see Proposition 5 and (7.4). Thus, (7.5) can be rewritten as
	\begin{equation}
	\begin{array}{ll}
tabu_{i+1}&=(A\setminus (S_{i+1} \cup S_{i+1}^+ \cup choice_{i+1} \cup (\delta_i^- \setminus S_{i+1}^+ )))\cup (\delta_i^- \setminus S_{i+1}^+ )\\
&=A\setminus (S_{i+1} \cup S_{i+1}^+ \cup choice_{i+1}).
	\end{array}	
	\end{equation}
For the case where $(\delta_{i}^+ \cup \delta_{i}^-) \cap choice_i =\emptyset$ with $\delta_i^+ \cup \delta_i^- \subseteq S_i^+ \cup tabu_i$, we write (7.2) using the premise of (7.1) as
	\begin{equation}
	tabu_{i+1}=(tabu_i \cup \delta_i^-) \setminus (S_i^+ \cup \delta_i^+) =tabu_i \setminus \delta_i^+=(A\setminus (S_i \cup S_{i}^+ \cup choice_i)) \setminus \delta_i^+.	
	\end{equation}
	Using (7.4) we rewrite (7.7) as
      \begin{equation}
	tabu_{i+1}=(A\setminus (S_i \cup S_{i}^+ \cup choice_{i+1} \cup \delta_i)) \setminus \delta_i^+=A \setminus (S_{i} \cup \delta_i \cup S_{i}^+ \cup \delta_i^+\cup choice_{i+1}).
	\end{equation}
	Using (2), (7.8) is equivalent to
	\begin{equation}
	tabu_{i+1} =A \setminus (S_{i+1} \cup S_{i+1}^+ \cup choice_{i+1}).
	\end{equation}
For the case where $\delta_i^+ \cup \delta_i^- \subseteq S_i^+ \cup tabu_i \cup choice_i$, let $\delta_i^+ = \delta_{it}^+ \cup \delta_{ic}^+ \cup \delta_{is}^+$ and $\delta_i^- = \delta_{it}^- \cup \delta_{ic}^- \cup \delta_{is}^-$ such that  $\delta_{it}^+ \cup \delta_{it}^- \subseteq tabu_i$, $\delta_{ic}^+ \cup \delta_{ic}^- \subseteq choice_i$, and $\delta_{is}^+ \cup \delta_{is}^- \subseteq S_i^+$. Then, we rewrite (7.2) as
\begin{equation}
\begin{array}{ll}
tabu_{i+1}&=(tabu_i \cup \delta_{ic}^- \cup \delta_{is}^- \cup \delta_{it}^-) \setminus (S_i^+ \cup \delta_{ic}^+ \cup \delta_{it}^+ \cup \delta_{is}^+)\\
&=(tabu_i \cup \delta_{ic}^- \cup \delta_{is}^-) \setminus (S_i^+ \cup \delta_{ic}^+ \cup \delta_{it}^+)\\
&=((A\setminus (S_i \cup S_i^+\cup choice_i)) \cup \delta_{ic}^- \cup \delta_{is}^-) \setminus (S_i^+ \cup \delta_{ic}^+ \cup \delta_{it}^+)\\
&=((A\setminus (S_i \cup S_i^+\cup choice_i)) \cup \delta_{ic}^-) \setminus (S_i^+ \cup \delta_{ic}^+ \cup \delta_{it}^+)\\
&=((A\setminus (S_i \cup S_i^+\cup choice_i)) \cup \delta_{ic}^-) \setminus (\delta_{ic}^+ \cup \delta_{it}^+)\\
&=((A\setminus (S_i \cup S_i^+\cup choice_{i+1} \cup \delta_i \cup \delta_{ic}^+ \cup \delta_{ic}^-)) \cup \delta_{ic}^-) \setminus (\delta_{ic}^+ \cup \delta_{it}^+)\\
&=(A\setminus (S_i \cup S_i^+\cup choice_{i+1} \cup \delta_i \cup \delta_{ic}^+)) \setminus (\delta_{ic}^+ \cup \delta_{it}^+)\\
&=(A\setminus (S_i \cup S_i^+\cup choice_{i+1} \cup \delta_i \cup \delta_{ic}^+ )) \setminus \delta_{it}^+\\
&=A\setminus (S_i \cup S_i^+\cup choice_{i+1} \cup \delta_i \cup \delta_{ic}^+ \cup \delta_{it}^+)\\
&=A\setminus (S_i \cup S_i^+\cup choice_{i+1} \cup \delta_i \cup \delta_{i}^+)\\
&=A\setminus (S_{i+1} \cup S_{i+1}^+\cup choice_{i+1}).
\end{array}
\end{equation}

	For the special case of line 17 in the algorithm, which indicates that $tabu_{i+1} =tabu_i \cup \{x\}_i$, $S_{i+1}=S_i$, and $choice_{i+1}=choice_i \setminus \{x\}_i$, we write
	\begin{equation}
	\begin{array}{ll}
	tabu_{i+1}&= tabu_i \cup \{x\}_i\\
	&=(A\setminus (S_i \cup S_i^+ \cup choice_i)) \cup \{x\}_i\\
	&= (A\setminus (S_{i+1} \cup S_{i+1}^+ \cup choice_{i})) \cup \{x\}_i\\
	&=(A \setminus (S_{i+1} \cup S_{i+1}^+ \cup choice_{i+1} \cup \{x\}_i)) \cup \{x\}_i\\
	&=A \setminus (S_{i+1} \cup S_{i+1}^+ \cup choice_{i+1}).
	\end{array}
	\end{equation}
	This concludes our proof of Proposition 7.\placeqed
\end{proof}

\begin{proposition}
	Let $H=(A,R)$ be an {\sc af} and Algorithm 1 be started with \begin{center}
		stb$(\emptyset, A \setminus \{x\mid (x,x)\in R\}, \{x\mid (x,x)\in R\})$.\end{center}
	The algorithm computes exactly the stable extensions of $H$.
\end{proposition}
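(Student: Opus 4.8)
The plan is to prove, by induction on $|choice|$, that every invocation $stb(S,choice,tabu)$ reachable from the initial call reports exactly the set $\Sigma(S,choice)$ of stable extensions $T$ of $H$ satisfying $S\subseteq T$ and $T\setminus S\subseteq choice$. First I would note that Propositions 5, 6 and 7 guarantee that at every state the relations $S\cap S^+=\emptyset$, $choice\subseteq A\setminus(S\cup S^+\cup S^-)$ and $tabu=A\setminus(S\cup S^+\cup choice)$ hold, so Propositions 1--4 apply to every reachable triple $(S,choice,tabu)$. I would also record the elementary fact that a stable extension $T$ never meets $T^+$ (indeed $T^+=A\setminus T$); in particular no stable extension contains a self-attacking argument. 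Consequently the top-level triple, with $S=\emptyset$ and $choice=A\setminus\{x\mid(x,x)\in R\}$, has $\Sigma(\emptyset,A\setminus\{x\mid(x,x)\in R\})$ equal to the set of all stable extensions of $H$, so the inductive claim for the top-level call is exactly the proposition.

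The heart of the argument is to show that the repeat--until block transforms $(S,choice,tabu)$ while preserving $\Sigma(S,choice)$, or else correctly detects that it is empty. If line 2 fires, Proposition 4 yields $\Sigma(S,choice)=\emptyset$, so returning with no output is sound. When $\alpha$ is committed (lines 3--5), Proposition 2 shows every $T\in\Sigma(S,choice)$ contains $\alpha$; since such a $T$ is stable, it meets neither $\alpha^+$ (attacked by $\alpha\subseteq T$) nor $\alpha^-$ (else $\alpha\subseteq T^+$), whence $T\setminus(S\cup\alpha)\subseteq choice\setminus(\alpha\cup\alpha^+\cup\alpha^-)$; as the reverse inclusion is immediate, $\Sigma(S,choice)=\Sigma(S\cup\alpha,\,choice\setminus(\alpha\cup\alpha^+\cup\alpha^-))$. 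The commitment of $\beta$ (lines 7--11) is treated identically via Proposition 3. Each non-returning pass strictly shrinks the finite set $choice$ (it deletes the non-empty $\alpha$ or $\beta$), so the loop terminates, and on exit $\alpha=\beta=\emptyset$ with $\Sigma$ unchanged.

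Next I would dispose of the post-loop code. If $choice=\emptyset$ then $\Sigma(S,\emptyset)$ is $\{S\}$ when $S$ is stable and $\emptyset$ otherwise; by Proposition 1 (with $choice=\emptyset$) $S$ is stable iff $tabu=\emptyset$, which is precisely the condition under which line 14 reports $S$, so the base case matches $\Sigma(S,\emptyset)$. Otherwise some $x\in choice$ is chosen and the algorithm recurses twice. Splitting on whether $x\in T$, I would show $\Sigma(S,choice)=\Sigma(S\cup\{x\},\,choice\setminus(\{x\}\cup\{x\}^+\cup\{x\}^-))\;\cup\;\Sigma(S,\,choice\setminus\{x\})$, a disjoint union: the case $x\in T$ gives the first summand by the same neighbour-exclusion argument used for $\alpha$, and the case $x\notin T$ (legitimate since $x\notin S$ by Proposition 6) gives the second. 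Both recursive calls have strictly smaller $choice$, so by the induction hypothesis they output these two sets, whose union is $\Sigma(S,choice)$ with no repetition. This completes the induction and hence the proposition.

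The main obstacle I anticipate is the loop analysis: I must verify that the forced-inclusion steps never lose and never duplicate a stable extension even though the triple changes repeatedly, and that the line-2 pruning is sound. This is where Propositions 2, 3 and 4 do the real work, combined with the bookkeeping observation that committing an argument forces its in- and out-neighbours out of $choice$ --- the very set-algebra already certified by the invariants of Propositions 5--7. Once single-pass preservation of $\Sigma$ is in hand, the branching step and the base case are routine set manipulations.
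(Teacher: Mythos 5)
Your proposal is correct, but it follows a genuinely different route from the paper's proof. The paper decomposes the claim into soundness (\emph{P1}) and completeness (\emph{P2}): \emph{P1} is obtained from the invariants of Propositions 5--7 together with Proposition 1 (essentially your post-loop/base case), while \emph{P2} is argued by contradiction, tracking a single element $a$ of an allegedly missed stable extension $Q$ through four cases (whether $a$ is self-attacking, terminates the first loop at line 2, remains in $choice$, or lands in $S_i \cup S_i^+ \cup tabu_i$), each case being closed by an appeal to the already-established soundness and the exhaustiveness of the search; Propositions 2--4 are never explicitly invoked there. You instead prove the sharper statement that every reachable invocation reports exactly $\Sigma(S,choice)$, by induction on $|choice|$, with Propositions 2--4 doing precisely the work of showing that the forced moves $\alpha$, $\beta$ and the line-2 cut preserve $\Sigma$, and with the two recursive calls splitting $\Sigma(S,choice)$ as a disjoint union according to whether $x \in T$. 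Your unified induction buys a more rigorous completeness direction --- replacing the paper's informal ``since the algorithm is sound, $a$ does not belong to any stable extension'' steps with a precise inductive hypothesis attached to each call --- and yields, as a by-product, that each stable extension is reported exactly once, since the branching union is disjoint. The paper's decomposition buys a shorter soundness argument and the familiar soundness/completeness presentation, but its completeness case analysis is the looser link; your argument could serve as a drop-in, more rigorous replacement for \emph{P2}.
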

\begin{proof}\renewcommand\qedsymbol{}
We will show that the following two statements (\emph{P1} \& \emph{P2}) hold.
	\begin{enumerate}	
		\item[\emph{P1.}] For every set, $S_i$, reported by Algorithm 1 at line 14 at some state~$i$, $S_i$ is a stable extension in $H$.
		\item[\emph{P2.}] For all $Q$, if $Q$ is a stable extension in $H$ and Algorithm 1 is sound (i.e. \emph{P1} is established), then there is a stable extension $S_i$, reported by the algorithm at line 14 at some state $i$, such that $S_i=Q$.
	\end{enumerate}
	Regarding \emph{P1}, to show that $S_i^+=A \setminus S_i$ we need to prove that
	\begin{equation}
	S_i \cap S_i^+ = \emptyset,
	\end{equation}
	and
	\begin{equation}
	S_i \cup S_i^+=A.
	\end{equation}	
	Note, (8.1) is proved in Proposition 5. For (8.2), it can be easily established by using Proposition 1 if we prove that for every state $i$
	\begin{equation}
	choice_i  \subseteq A \setminus (S_i \cup S_i^+\cup S_i^-),
	\end{equation}
	and
	\begin{equation}
	tabu_i=A \setminus (S_i\cup S_i^+ \cup choice_i).
	\end{equation}
	However, (8.3) and (8.4) are proved in Proposition 6 and 7 respectively. Thus, referring to line 14 in the algorithm, we note that $S_i$ is reported stable if and only if $tabu_i=choice_i=\emptyset$. Considering  (8.4), we note that $\emptyset = A \setminus (S_i \cup S_i^+ \cup \emptyset)$, and hence (8.2) holds. The proof of \emph{P1} is complete.\\ \\
Regarding \emph{P2}, we rewrite \emph{P2} (by modifying the consequence) into \emph{\'P2}.\\ \\
\emph{\'P2}: For all $Q$, if $Q$ is a stable extension in $H$ and Algorithm 1 is sound, then there is a stable extension $S_i$, reported by the algorithm at line 14 at some state $i$, such that for all $a \in Q$ it holds that $a \in S_i$. \\ \\
We establish \emph{\'P2} by contradiction. Later, we show that the consequence of \emph{\'P2} is equivalent to the consequence of \emph{P2}. Now, assume that \emph{\'P2} is false. \\ \\\
\emph{Negation of \'P2}: There is $Q$ such that $Q$ is a stable extension in $H$, Algorithm~1 is sound, and for every $S_i$ reported by the algorithm at line 14 at some state~$i$, there is $a \in Q$ such that $a \notin S_i$.\\ \\
We identify four cases.
\paragraph{Case 1} For $a \in choice_1$, if the algorithm terminates at line 2 during the very first execution of the repeat-until block (but not necessarily from the first round), then, since the algorithm is sound, $H$ has no stable extensions. This contradicts the assumption that $Q \supseteq \{a\}$ is a stable extension in~$H$. Hence, \emph{\'P2} holds.
\paragraph{Case 2} If $(a,a) \in R$, then this is a contradiction with the assumption that $Q\supseteq \{a\}$ is a stable extension in $H$. Hence, \emph{\'P2} holds. 
\paragraph{Case 3} With $a \in choice_1$, assume that after the very first execution of the repeat-until block (i.e. including one or more rounds), $a \in choice_k$ for some state $k\ge1$. Then, for a state $i\ge k$, let $x=a$ (see line 16 in Algorithm 1). If for all subsequent states $j > i$, the set $S_j \supseteq \{a\}$ is not reported stable by the algorithm, then, since the algorithm is sound, $a$ does not belong to any stable extension. This contradicts the assumption that $Q \supseteq \{a\}$ is a stable extension in $H$. Hence, \emph{\'P2} holds. 
\paragraph{Case 4} With $a \in choice_1$, assume that after the very first execution of the repeat-until block (i.e. including one or more rounds), $a \notin choice_i$ for some state $i>1$. This implies, according to the repeat-until block's actions, that $\{a\} \subseteq S_i \cup S_i^+ \cup tabu_i$. For $\{a\} \subseteq S_i$, if for all subsequent states $j > i$, the set $S_j \supseteq \{a\}$ is not reported stable by the algorithm, then, since the algorithm is sound, $a$ does not belong to any stable extension. This contradicts the assumption that $Q \supseteq \{a\}$ is a stable extension in $H$. Likewise, for $\{a\} \subseteq S_i^+ \cup tabu_i$, since the algorithm's actions are sound, this implies that $a$ does not belong to any stable extension. This contradicts the assumption that $Q \supseteq \{a\}$ is a stable extension in $H$. Hence, \emph{\'P2} holds.\\ \\
Now we rewrite the consequence of \emph{\'P2}.\\ \\
\emph{The consequence of \'P2}: There is a stable extension $S_i$, reported by the algorithm at line 14 at some state $i$, such that $Q \subseteq S_i$.\\ \\
$Q$ being a proper subset of $S_i$ is impossible because otherwise it contradicts that the algorithm is sound or that $Q$ is stable. Therefore, the consequence of \emph{\'P2} can be rewritten as next.\\ \\
\emph{The consequence of \'P2}: There is a stable extension $S_i$, reported by the algorithm at line 14 at some state $i$, such that $Q = S_i$.\\ \\
This is exactly the consequence of \emph{P2}. \placeqed
\end{proof}
\section{Validation of a finer-level implementation of the algorithm}
Recall that Algorithm 1 includes \emph{set} operations substantially. So far, it is left unspecified how to mechanize these operations. In this section, we will develop a new version that implements Algorithm 1. In other words, we will define at a finer level the underlying actions of Algorithm~1. To this end, we employ a total mapping to indicate the status of an argument in a given {\sc af} with respect to $S$, $S^+$, $choice$, and $tabu$.
 
 \begin{mydef}
 Let $H=(A,R)$ be an {\sc af}, $S\subseteq A$, $S \cap S^+=\emptyset$, $choice \subseteq A\setminus (S \cup S^+\cup S^-)$, $tabu = A\setminus (S\cup S^+\cup choice)$, and $\mu :A\rightarrow \{blank,in,out,must\mhyphen out\}$ be a total mapping. Then, $\mu$ is a labelling of $H$ with respect to $S$ if and only if for all $x \in A$ it is the case that
 \begin{equation} \nonumber
 \begin{array}{l}
 	x \in S \iff \mu(x)=in,\\
 	x \in S^+ \iff \mu(x)=out,\\
 	x \in choice \iff \mu(x)=blank, \text{ and }\\
 	x \in tabu \iff \mu(x)=must\mhyphen out.
\end{array}
 \end{equation}
\end{mydef}
Note that our labelling mechanism is used here as an \textit{ad hoc} algorithmic vehicle. For the widely-known labelling-based argumentation semantics, we refer the reader to \cite{DBLP:journals/sLogica/CaminadaG09,DBLP:journals/ker/BaroniCG11} for example. Now, we specify the labellings that correspond to stable extensions.
\begin{proposition}
	Let $H=(A,R)$ be an {\sc af}, $S\subseteq A$, $S\cap S^+=\emptyset$, $choice \subseteq A\setminus (S \cup S^+\cup S^-)$, $tabu = A\setminus (S\cup S^+\cup choice)$, and $\mu$ be a labelling of $H$ with respect to $S$. Then, $\{x\mid \mu(x)=in\}$ is a stable extension in $H$ if and only if $\{x\mid \mu(x)=blank\}=\emptyset$ and $\{x\mid \mu(x)=must\mhyphen out\}=\emptyset$. 
\end{proposition}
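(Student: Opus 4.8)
The plan is to recognize that this proposition is nothing more than a restatement of Proposition~1 under the dictionary supplied by Definition~1, so the entire argument reduces to translating label-classes into the set notation and then quoting Proposition~1 verbatim.

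First I would unpack Definition~1. Since $\mu$ is a labelling of $H$ with respect to $S$, the four defining biconditionals hold for every $x \in A$. Reading each biconditional as an equality of sets, I obtain
\begin{equation*}
\{x\mid \mu(x)=in\}=S,\quad \{x\mid \mu(x)=blank\}=choice,\quad \{x\mid \mu(x)=must\mhyphen out\}=tabu
\end{equation*}
(the fourth class $\{x\mid \mu(x)=out\}=S^+$ is recorded for completeness but is not needed for the statement). The first equality follows from $x\in S\iff \mu(x)=in$, and the other two from the analogous biconditionals for $blank$ and $must\mhyphen out$.

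Substituting these identities into the claim, the assertion ``$\{x\mid \mu(x)=in\}$ is a stable extension in $H$ if and only if $\{x\mid \mu(x)=blank\}=\emptyset$ and $\{x\mid \mu(x)=must\mhyphen out\}=\emptyset$'' becomes exactly ``$S$ is a stable extension in $H$ if and only if $choice=\emptyset$ and $tabu=\emptyset$''. The hypotheses carried by this proposition ($S\cap S^+=\emptyset$, $choice \subseteq A\setminus(S\cup S^+\cup S^-)$, and $tabu=A\setminus(S\cup S^+\cup choice)$) are precisely those required by Proposition~1, so I would apply Proposition~1 directly to finish.

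I do not anticipate any genuine obstacle, since the result is a corollary of Proposition~1; the only point demanding care is the set-to-label translation, namely checking that each set equality above is legitimately extracted from the corresponding biconditional of Definition~1 before Proposition~1 is invoked.
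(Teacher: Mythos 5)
Your proposal is correct, and it takes a genuinely different (and more economical) route than the paper. The paper never invokes Proposition~1 in its proof of this statement; instead it re-derives the equivalence from scratch in label notation: for $\implies$, from stability it obtains $\{x\mid \mu(x)=in\}\cup\{x\mid \mu(x)=out\}=A$, then uses $tabu=A\setminus(S\cup S^+\cup choice)$ and $choice\subseteq A\setminus(S\cup S^+\cup S^-)$, read through Definition~1, to force both the $must\mhyphen out$ class and the $blank$ class to be empty; for $\impliedby$, it recovers $A=\{x\mid\mu(x)=in\}\cup\{x\mid\mu(x)=out\}$ from the emptiness assumptions and concludes $\{x\mid\mu(x)=out\}=A\setminus\{x\mid\mu(x)=in\}$ via the disjointness $\{x\mid\mu(x)=in\}\cap\{x\mid\mu(x)=out\}=\emptyset$. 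That is precisely the proof of Proposition~1 transliterated into $\mu$-notation. Your reduction --- extract the set identities $\{x\mid\mu(x)=in\}=S$, $\{x\mid\mu(x)=blank\}=choice$, $\{x\mid\mu(x)=must\mhyphen out\}=tabu$ from the biconditionals of Definition~1, check that the hypotheses of the present proposition are exactly those of Proposition~1, and quote that proposition --- is legitimate at every step (the translation is valid because $\mu$ is total on $A$ and all four sets are subsets of $A$), and you are also right that the identity $\{x\mid\mu(x)=out\}=S^+$ is not needed, since Proposition~1 is phrased in terms of $S$, $choice$, $tabu$ only, with $S^+$ derived from $S$. What your approach buys is brevity and a transparent logical architecture: Proposition~9 is exhibited as a corollary of Proposition~1 under the labelling dictionary, with no duplicated set manipulation. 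What the paper's inlined version buys is self-containedness within the labelling formalism of Section~3, so the reader need not match hypotheses across sections; the mathematical content is the same either way.
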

\begin{proof}\renewcommand\qedsymbol{}
We prove $\implies$ firstly. Considering Definition 1, $S=\{x\mid \mu(x)=in\}$ and $S^+=\{x\mid \mu(x)=out\}$. Suppose $\{x\mid \mu(x)=in\}$ is stable. Then,
\begin{flalign}
\{x\mid \mu(x)=out\}=& A \setminus \{x\mid \mu(x)=in\}.
\end{flalign}
Thus, 
 \begin{equation}
 \{x \mid \mu(x)=in\} \cup \{x \mid \mu(x)=out\}=A.
 \end{equation}
By Definition 1, observe that
 \begin{equation}
 \begin{array}{ll}
 tabu&=\{x\mid \mu(x)=must\mhyphen out\}\\
 &=A \setminus(\{x\mid \mu(x)=in\} \cup \{x\mid \mu(x)=out\} \cup \{x\mid \mu(x)=blank\}).
\end{array}
\end{equation}
Given (9.2), rewrite (9.3) as
 \begin{equation}  
 \{x\mid \mu(x)=must\mhyphen out\}=A \setminus(A \cup \{x\mid \mu(x)=blank\})=\emptyset.
 \end{equation}
 Likewise, as Definition~1 states that $choice=\{x\mid \mu(x)=blank\}$,
 \begin{equation}
  \{x\mid \mu(x)=blank\} \subseteq A \setminus (\{x\mid \mu(x)=in\} \cup \{x\mid \mu(x)=out\}\cup S^-).
 \end{equation}
Given (9.2), we rewrite (9.5) as
\begin{equation}
\{x\mid \mu(x)=blank\} \subseteq A \setminus (A\cup S^-).
\end{equation}
Hence, $\{x\mid \mu(x)=blank\}=\emptyset$.

Now we prove $\impliedby$. Suppose $\{x\mid \mu(x)=blank\}=\emptyset$ and $\{x\mid \mu(x)=must\mhyphen out\}=\emptyset$. Note that
\begin{equation}
\begin{array}{ll}
tabu&=\{x\mid \mu(x)=must\mhyphen out\}=\\
&=A \setminus (\{x \mid \mu(x)=in\} \cup \{x \mid \mu(x)=out\} \cup \{x \mid \mu(x)=blank\}),
\end{array}
\end{equation}
which can be rewritten as
\begin{equation}
\emptyset= A \setminus (\{x \mid \mu(x)=in\} \cup\{x \mid \mu(x)=out\} \cup \emptyset).
\end{equation}
And subsequently,
\begin{equation}
A = \{x \mid \mu(x)=in\} \cup \{x \mid \mu(x)=out\}.
\end{equation}
Given (9.9), and since $\{x \mid \mu(x)=in\} \cap \{x \mid \mu(x)=out\} = \emptyset$, we note that $\{x\mid \mu(x)=out\}=A\setminus \{x\mid \mu(x)=in\}$. Consequently, $\{x \mid \mu(x)=in\}$ is a stable extension in $H$.\placeqed
\end{proof}

Using the labelling notion introduced in Definition 1, next we discuss three propositions that are analogous to Proposition 2, 3 and 4, and which are crucial for efficient enumeration of stable extensions.

\begin{proposition}
	 Let $H=(A,R)$ be an {\sc af}, $S\subseteq A$, $S \cap S^+=\emptyset$, $choice \subseteq A\setminus (S \cup S^+\cup S^-)$, $tabu = A\setminus (S\cup S^+\cup choice)$, $\mu$ be a labelling of $H$ with respect to $S$, and $v$ be an argument with $\mu(v)=blank$ such that for all $y \in \{v\}^-$ it holds that $\mu(y) \in \{out,must\mhyphen out\}$. If there is a stable extension $T \supseteq \{x\mid \mu(x)=in\}$ such that $T \setminus \{x\mid \mu(x)=in\} \subseteq \{x\mid \mu(x)=blank\}$, then $v \in T$.
\end{proposition}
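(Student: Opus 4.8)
The plan is to recognize that this proposition is simply Proposition 2 restated in the labelling vocabulary of Definition 1, and therefore to reduce it to Proposition 2 rather than reprove it from scratch. The key move is to invoke the four biconditionals of Definition 1 to translate every labelling condition into the set language of $S$, $S^+$, $choice$, and $tabu$. Concretely, Definition 1 gives $S=\{x\mid \mu(x)=in\}$, $S^+=\{x\mid \mu(x)=out\}$, $choice=\{x\mid \mu(x)=blank\}$, and $tabu=\{x\mid \mu(x)=must\mhyphen out\}$, so the ambient hypotheses of Proposition 2 on these four sets are already supplied by the premise of the present proposition.

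Next I would check that the special hypotheses on $v$ match those placed on the argument $x$ in Proposition 2. From $\mu(v)=blank$ and Definition 1 we obtain $v \in choice$. The condition that every $y \in \{v\}^-$ satisfies $\mu(y) \in \{out,must\mhyphen out\}$ translates, again by Definition 1, into $\{v\}^- \subseteq S^+ \cup tabu$. Finally, the stable extension $T$ of the hypothesis satisfies $T \supseteq \{x\mid \mu(x)=in\}=S$ and $T \setminus S = T \setminus \{x\mid \mu(x)=in\} \subseteq \{x\mid \mu(x)=blank\}=choice$, which is precisely the existence assumption appearing in Proposition 2.

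Having verified every premise, I would apply Proposition 2 with the instantiation $x := v$ to conclude $v \in T$, which is exactly the desired consequence. The only point requiring care is ensuring the labelling-to-set dictionary is read in the correct direction for each condition, in particular interpreting $\mu(y) \in \{out,must\mhyphen out\}$ as membership in the union $S^+ \cup tabu$ rather than in either set alone. Since no genuine obstacle arises, a direct proof that mirrors the contradiction argument of Proposition 2 (deriving $v \in T^+$, then $T \cap (S^+ \cup tabu) \neq \emptyset$, against $T \subseteq S \cup choice$) is also available, but it would merely duplicate that reasoning; the reduction through Definition 1 is the cleaner route and is the one I would present.
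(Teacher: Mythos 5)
Your proof is correct, but it takes a different route from the paper. The paper proves this proposition (its Proposition 10) from scratch, by contradiction: it assumes $v \notin T$, derives $v \in T^+$ and hence $T \cap (\{x\mid \mu(x)=out\} \cup \{x\mid \mu(x)=must\mhyphen out\}) \neq \emptyset$, and then uses the disjointness of the label classes together with $T \subseteq \{x\mid \mu(x)=in\} \cup \{x\mid \mu(x)=blank\}$ to derive the opposite, i.e.\ it replays the argument of Proposition 2 verbatim in labelling vocabulary. You instead reduce the statement to Proposition 2 itself: Definition 1 supplies the dictionary $S=\{x\mid \mu(x)=in\}$, $S^+=\{x\mid \mu(x)=out\}$, $choice=\{x\mid \mu(x)=blank\}$, $tabu=\{x\mid \mu(x)=must\mhyphen out\}$, under which every hypothesis of the present proposition translates exactly into a hypothesis of Proposition 2 (with $x:=v$), and the conclusion $v \in T$ follows by a single application of that proposition. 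Your translation is carried out carefully --- in particular the reading of $\mu(y)\in\{out,must\mhyphen out\}$ as $y \in S^+ \cup tabu$, which is the only place a misstep is possible --- so the reduction is sound. What your route buys is brevity and the explicit observation that Propositions 10--12 are not new facts but Propositions 2--4 rewritten; what the paper's route buys is a self-contained Section 3 whose proofs can be read by someone working purely in the labelling formulation, which matches the paper's stated goal of validating the finer-level implementation on its own terms. Either proof is acceptable; yours is the more economical.
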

\begin{proof}\renewcommand\qedsymbol{}
	Suppose $v \notin T$. Then, $v \in T^+$ because $T$ is stable. Thus, 
	\begin{equation}
	\exists y \in \{v\}^- \text{ such that } y \in T. 
	\end{equation}
	However, consistently with the premise of Proposition 10,
	\begin{equation}
	\forall y \in \{v\}^-~\mu(y) \in \{out,must\mhyphen out\}.
	\end{equation}
	Therefore, (10.1) and (10.2) imply
	\begin{equation}
	T \cap (\{x\mid \mu(x)=out\} \cup \{x\mid \mu(x)=must\mhyphen out\}) \neq \emptyset.
	\end{equation}
	According to the premise of this proposition, we note that
	\begin{equation}
	S \cap (S^+ \cup tabu)=\emptyset \text{ and }\\
	choice \cap (S^+ \cup tabu)=\emptyset.
	\end{equation}
	Applying Definition 1, (10.4) can be rewritten respectively as
	\begin{equation}	
	\begin{array}{l}
	\{x\mid \mu(x)=in\} \cap (\{x\mid \mu(x)=out\}\cup \{x\mid \mu(x)=must\mhyphen out\})=\emptyset, 	\text{ and }\\
	\{x\mid \mu(x)=blank\} \cap (\{x\mid \mu(x)=out\}\cup \{x\mid \mu(x)=must\mhyphen out\})=\emptyset.
	\end{array}
	\end{equation}
	Given the premise of this proposition, note that 
	\begin{equation}
	T\supseteq \{x\mid \mu(x)=in\}, \text {and }
	T\setminus \{x\mid \mu(x)=in\} \subseteq \{x\mid \mu(x)=blank\}.
	\end{equation}
Thus, (10.5) and (10.6) imply that
	\begin{equation}
	T \cap (\{x\mid \mu(x)=out\}\cup \{x\mid \mu(x)=must\mhyphen out\}) =\emptyset.
	\end{equation}
	Observe the contradiction between (10.3) and (10.7). \placeqed
\end{proof}
\begin{proposition}
	Let $H=(A,R)$ be an {\sc af}, $S\subseteq A$, $S \cap S^+=\emptyset$, $choice \subseteq A\setminus (S \cup S^+\cup S^-)$, $tabu = A\setminus (S\cup S^+\cup choice)$, $\mu$ be a labelling of $H$ with respect to $S$, and $v$ be an argument with $\mu(v)=blank$ such that for some $y$ with $\mu(y)=must\mhyphen out$ it is the case that $\{x\in \{y\}^- \mid \mu(x)=blank\}=\{v\}$. If there is a stable extension $T \supseteq \{x\mid \mu(x)=in\}$ such that $T \setminus \{x\mid \mu(x)=in\} \subseteq \{x\mid \mu(x)=blank\}$, then $v \in T$. 	
\end{proposition}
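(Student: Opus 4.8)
The plan is to recognize that Proposition 11 is simply Proposition 3 re-expressed in the labelling vocabulary of Definition 1, and to reduce it to that proposition. First I would use Definition 1 as a dictionary to translate each labelling hypothesis into its set-theoretic counterpart: $\mu(v)=blank$ gives $v \in choice$; $\mu(y)=must\mhyphen out$ gives $y \in tabu$; the condition $\{x \in \{y\}^- \mid \mu(x)=blank\}=\{v\}$ becomes $\{y\}^- \cap choice = \{v\}$, because the blank-labelled arguments are exactly $choice$; and the two hypotheses on $T$, namely $T \supseteq \{x \mid \mu(x)=in\}$ and $T \setminus \{x \mid \mu(x)=in\} \subseteq \{x \mid \mu(x)=blank\}$, become $T \supseteq S$ and $T \setminus S \subseteq choice$ respectively.

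Once these translations are in place, the collected hypotheses are precisely those of Proposition 3, with the argument called $x$ there playing the role of $v$ here and with the witness $y \in tabu$ satisfying $\{y\}^- \cap choice = \{v\}$. Proposition 3 then yields $v \in T$ at once, completing the proof. I would favour this reductive route, since it keeps the argument short and makes the exact parallel with Proposition 3 transparent.

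Should a self-contained argument be preferred, to match the style in which Proposition 10 was handled, I would instead replicate the contradiction of Proposition 3 directly in the labelling language. Suppose $v \notin T$. Since $\mu(v)=blank$ we get $T \setminus \{x \mid \mu(x)=in\} \subseteq \{x \mid \mu(x)=blank\} \setminus \{v\}$, and since $T \subseteq \{x \mid \mu(x)=in\} \cup \{x \mid \mu(x)=blank\}$ is disjoint from the $must\mhyphen out$ arguments, $y \notin T$; stability of $T$ then forces $y \in T^+$, hence $\{y\}^- \cap T \neq \emptyset$. Because $y$ is $must\mhyphen out$ it is not $out$, so no $in$-labelled argument attacks it, i.e. $\{y\}^- \cap \{x \mid \mu(x)=in\} = \emptyset$; combining these facts shows that $\{y\}^-$ must meet $\{x \mid \mu(x)=blank\} \setminus \{v\}$, contradicting $\{x \in \{y\}^- \mid \mu(x)=blank\} = \{v\}$.

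There is no genuine obstacle here: the whole content of the proof is the bookkeeping supplied by the dictionary of Definition 1. The only point demanding a moment's care is verifying that restricting $\mu(x)=blank$ to the attackers $\{y\}^-$ translates exactly to $\{y\}^- \cap choice = \{v\}$, rather than to some weaker statement; once that equivalence is pinned down, the result follows either by direct citation of Proposition 3 or by the mirrored contradiction argument above.
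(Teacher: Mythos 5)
Your proposal is correct, and your preferred route is genuinely different from the paper's. The paper proves Proposition 11 self-contained: it assumes $v \notin T$, derives $T \subseteq \{x\mid \mu(x)=in\} \cup \{x\mid \mu(x)=blank\}$, concludes $y \notin T$ and hence $y \in T^+$ by stability, rules out any $in$-labelled attacker of $y$ (since $y$ is $must\mhyphen out$, not $out$), and so forces an attacker of $y$ in $\{x\mid \mu(x)=blank\}\setminus\{v\}$, contradicting the premise --- exactly the argument you sketch as your fallback. Your primary route instead treats Definition 1 as a dictionary and cites Proposition 3 outright; this is legitimate, since Definition 1 gives literal set equalities $S=\{x\mid\mu(x)=in\}$, $choice=\{x\mid\mu(x)=blank\}$, $tabu=\{x\mid\mu(x)=must\mhyphen out\}$, the set-theoretic side conditions ($S\cap S^+=\emptyset$, $choice \subseteq A\setminus(S\cup S^+\cup S^-)$, $tabu = A\setminus(S\cup S^+\cup choice)$) are verbatim hypotheses of both propositions, and $\{x\in\{y\}^-\mid \mu(x)=blank\}=\{y\}^-\cap choice$ pins down the one translation needing care. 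What the reduction buys is brevity and a formal realization of the paper's own remark that Propositions 10--12 are ``analogous to'' Propositions 2--4; what the paper's repetition buys is that Section 3 reads independently of Section 2's proofs and keeps a uniform self-contained style across Propositions 10, 11, and 12. Either version would be accepted; yours merely refactors the duplication the paper chose to keep.
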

\begin{proof}\renewcommand\qedsymbol{}
	Suppose $v \notin T$. Then,
	\begin{equation}
	T \setminus \{x\mid \mu(x)=in\} \subseteq \{x\mid \mu(x)=blank\} \setminus \{v\}. 
	\end{equation}
	Due to $tabu=A\setminus(S\cup S^+ \cup choice)$, and according to Definition 1 it holds that
	\begin{equation}
	\begin{array}{ll}
	tabu&=\{x\mid \mu(x)=must\mhyphen out\} \\
	&= A\setminus (\{x\mid \mu(x)=in\}\cup \{x\mid \mu(x)=out\}\cup \{x\mid \mu(x)=blank\}).
	\end{array}
	\end{equation}
	Since $T \supseteq \{x\mid \mu(x)=in\}$ such that $T \setminus \{x\mid \mu(x)=in\} \subseteq \{x\mid \mu(x)=blank\}$,
	\begin{flalign}
	T \subseteq \{x\mid \mu(x)=in\} \cup \{x\mid \mu(x)=blank\}. 
	\end{flalign}
	Referring to the premise of this proposition, as $y \in \{x\mid \mu(x)=must\mhyphen out\}$, $y \in tabu$ (recall Definition~1) and so $y \notin T$. However, since $T$ is stable, $y \in T^+$. Thus, 
	\begin{equation}
	\{y\}^- \cap T \neq \emptyset.
	\end{equation}
	Given (11.2) and because $y \in \{x\mid \mu(x)=must\mhyphen out\}$, it holds that $y \notin \{x\mid \mu(x)=out\}$; hence $ y \notin S^+$ consistently with Definition 1. Therefore, \begin{equation}\{y\}^- \cap \{x\mid \mu(x)=in\} = \emptyset.\end{equation} Thereby, (11.1), (11.4) and (11.5) together imply that
	\begin{equation}
	\{y\}^- \cap (\{x\mid \mu(x)=blank\} \setminus \{v\}) \neq \emptyset.	
	\end{equation} 
	From the premise of this proposition, it is the case that $\{x\in \{y\}^- \mid \mu(x)=blank\}=\{v\}$. Thus,
	\begin{equation}
	\{y\}^- \cap \{x\mid \mu(x)=blank\} =\{v\}.	
	\end{equation}
	See the contradiction between (11.6) and (11.7).\placeqed
\end{proof}

\numberwithin{equation}{proposition}
\begin{proposition}
	Let $H=(A,R)$ be an {\sc af}, $S\subseteq A$, $S \cap S^+=\emptyset$, $choice \subseteq A\setminus (S \cup S^+\cup S^-)$, $tabu = A\setminus (S\cup S^+\cup choice)$, $\mu$ be a labelling of $H$ with respect to $S$, and $x$ be an argument with $\mu(x)=must\mhyphen out$ such that for all $y \in \{x\}^-$ it is the case that $\mu(y)\in \{out,must\_out\}$. There does not exist a stable extension $Q \supseteq \{v\mid \mu(v)=in\}$ such that $Q \setminus \{v\mid \mu(v)=in\} \subseteq \{v\mid \mu(v)=blank\}$.
\end{proposition}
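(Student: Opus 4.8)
The plan is to mirror the proof of Proposition 4 after translating the labelling hypotheses back into the set-based language supplied by Definition 1. First I would record that dictionary, namely $S=\{v\mid \mu(v)=in\}$, $S^+=\{v\mid \mu(v)=out\}$, $choice=\{v\mid \mu(v)=blank\}$, and $tabu=\{v\mid \mu(v)=must\mhyphen out\}$. Under this correspondence the hypothesis $\mu(x)=must\mhyphen out$ reads $x\in tabu$, and the hypothesis that every $y\in\{x\}^-$ satisfies $\mu(y)\in\{out,must\mhyphen out\}$ reads $\{x\}^-\subseteq S^+\cup tabu$. The target statement translates likewise: the forbidden $Q$ is exactly a stable extension with $Q\supseteq S$ and $Q\setminus S\subseteq choice$.

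Having performed this translation, I observe that the hypotheses become precisely those of Proposition 4 and the conclusion becomes precisely its conclusion, so the result follows at once by citing Proposition 4. Alternatively, to keep the section self-contained (as was done for Propositions 10 and 11), I would reproduce the short contradiction argument directly in the labelling language. Assume such a $Q$ exists. Since the four label classes partition $A$, the sets $S\cup choice=\{v\mid \mu(v)\in\{in,blank\}\}$ and $S^+\cup tabu=\{v\mid \mu(v)\in\{out,must\mhyphen out\}\}$ are disjoint; and since $Q\subseteq S\cup choice$ by $Q\supseteq S$ together with $Q\setminus S\subseteq choice$, I obtain $Q\cap(S^+\cup tabu)=\emptyset$. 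Because $x\in tabu$, this forces $x\notin Q$, whence $x\in Q^+$ as $Q$ is stable, that is, $\{x\}^-\cap Q\neq\emptyset$. The hypothesis $\{x\}^-\subseteq S^+\cup tabu$ then yields $Q\cap(S^+\cup tabu)\neq\emptyset$, contradicting the disjointness just derived.

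I do not expect any genuine obstacle here: the proposition is the \emph{must-out} analogue of Proposition 4, and essentially all of its content lies in carefully applying the four-way correspondence of Definition 1. The one place to be careful is the disjointness claim between $S\cup choice$ and $S^+\cup tabu$; this is immediate from the standing assumptions $S\cap S^+=\emptyset$, $choice\subseteq A\setminus(S\cup S^+\cup S^-)$, and $tabu=A\setminus(S\cup S^+\cup choice)$ (equivalently, from the fact that the labels $in$, $out$, $blank$, $must\mhyphen out$ partition $A$), yet it is the hinge on which both the ``$=\emptyset$'' and ``$\neq\emptyset$'' halves of the contradiction turn.
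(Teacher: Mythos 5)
Your proposal is correct, but it is organized differently from the paper's own proof, and the comparison is worth making explicit. Your primary route---translate the hypotheses through the Definition 1 dictionary ($\mu(x)=must\mhyphen out \iff x\in tabu$, $\mu(y)\in\{out,must\mhyphen out\}$ for all $y\in\{x\}^-$ $\iff$ $\{x\}^-\subseteq S^+\cup tabu$) and then cite Proposition 4 verbatim---is a genuine reduction that the paper deliberately does not take: although the paper announces Propositions 10--12 as ``analogous to Proposition 2, 3 and 4,'' it reproves each of them from scratch in the labelling language, keeping Section 3 self-contained. Moreover, the paper's own proof of Proposition 12 is structured as a \emph{direct} argument rather than your contradiction: it shows that for every candidate $T\supseteq\{v\mid\mu(v)=in\}$ with $T\setminus\{v\mid\mu(v)=in\}\subseteq\{v\mid\mu(v)=blank\}$, one has both $x\notin T$ (since $x$ carries the label $must\mhyphen out$) and $x\notin T^+$ (since no attacker of $x$ is labelled $in$ or $blank$, so $\{x\}^-\cap T=\emptyset$), whence $T^+\neq A\setminus T$ and $T$ is simply not stable---stability of $T$ is never assumed. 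Your fallback contradiction argument (assume $Q$ stable, deduce $x\in Q^+$ from $x\notin Q$, then contradict the disjointness $Q\cap(S^+\cup tabu)=\emptyset$) is instead a faithful transcription of the paper's proof of Proposition 4 into labelling terms; it is equally valid, and you correctly isolate the disjointness of $S\cup choice$ from $S^+\cup tabu$ as the hinge. In short: your reduction buys economy and avoids duplicating an argument already on record, at the cost of leaning on the exactness of the Definition 1 correspondence; the paper's choice buys a self-contained Section 3 and exhibits concretely that the $must\mhyphen out$ argument $x$ ends up in neither $T$ nor $T^+$, which is the intuitive reason the branch can be pruned.
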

\begin{proof}\renewcommand\qedsymbol{}
Consider Definition~1 during the proof. As $x\in \{v\mid \mu(v)=must\mhyphen out\}$ and $tabu = A\setminus (S\cup S^+\cup choice)$,
 \begin{equation}
 x \notin \{v\mid \mu(v)=in\} \cup \{v\mid \mu(v)=out\} \cup \{v \mid \mu(v)=blank\}.
 \end{equation}
Subsequently, 
 \begin{equation}\begin{array}{l}
 \forall T \supseteq \{v\mid \mu(v)=in\} \text{ such that } T \setminus \{v\mid \mu(v)=in\} \subseteq \{v\mid \mu(v)=blank\},\\
 x\notin T.
\end{array}\end{equation}
 Since for all $y \in \{x\}^-$ it holds that $\mu(y) \in \{out,must\_out\}$, 
 \begin{equation}
 \{x\}^- \subseteq \{v\mid \mu(v)=out\} \cup \{v\mid \mu(v)=must\mhyphen out\}.
 \end{equation}
Hence,
\begin{equation}
\{x\}^- \cap (\{v\mid \mu(v)=in\} \cup \{v\mid \mu(v)=blank\}) = \emptyset.
\end{equation} 
Therefore, 
\begin{equation}\begin{array}{l}
\forall T \supseteq \{v\mid \mu(v)=in\} \text{ such that } T \setminus \{v\mid \mu(v)=in\} \subseteq \{v\mid \mu(v)=blank\},\\
 \{x\}^- \cap T= \emptyset.
\end{array}\end{equation} 
Thus,
 \begin{equation}\begin{array}{l}
  \forall T  \supseteq \{v\mid \mu(v)=in\} \text{ such that } T \setminus \{v\mid \mu(v)=in\} \subseteq \{v\mid \mu(v)=blank\},\\
  x\notin T^+.
 \end{array}\end{equation}
 Given (12.2) and (12.6), 
 \begin{equation}\begin{array}{l}
 \forall T  \supseteq \{v\mid \mu(v)=in\} \text{ such that } T \setminus \{v\mid \mu(v)=in\} \subseteq \{v\mid \mu(v)=blank\},\\
 T^+~\not = A \setminus T.
 \end{array}\end{equation} 
 Consequently,
 \begin{equation}\begin{array}{l}
 \forall T  \supseteq \{v\mid \mu(v)=in\} \text{ such that } T \setminus \{v\mid \mu(v)=in\} \subseteq \{v\mid \mu(v)=blank\},\\
 T \text{ is not stable.}
 \end{array}\end{equation} 
That completes the proof of Proposition 12.\placeqed
\end{proof}

Now, we turn to the lines 2, 3 and~7 in Algorithm 1. Recall, by applying these lines it is required to search (respectively) for an argument $x$ such that

\begin{equation}
\begin{array}{l}
\nonumber
x \in tabu \text{ with } \{x\}^- \subseteq S^+ \cup tabu, \text{ or } \\
x \in choice \text{ with } \{x\}^- \subseteq S^+ \cup tabu, \text{ or } \\
x \in tabu \text{ with } |\{x\}^- \cap choice|=1.
\end{array}
\end{equation}
To implement these lines (i.e. 2, 3, and~7) efficiently, we define the following construct (inspired by a hint given in \cite{DBLP:books/sp/09/ModgilC09} and already utilized in \cite{DBLP:journals/cj/NofalAD21}). 
\begin{mydef}
	Let $H=(A,R)$ be an {\sc af}, $S\subseteq A$, $S\cap S^+=\emptyset$, $choice \subseteq A\setminus (S \cup S^+\cup S^-)$, $tabu = A\setminus (S\cup~S^+\cup~choice)$, and $\mu$ be a labelling of $H$ with respect to $S$. For all $x \in A$, if $\mu(x) \in \{blank,must\mhyphen out\}$, then $\pi(x)\myeq |\{y \in \{x\}^-: \mu(y)=blank\}|$.
\end{mydef}
The intuition behind Definition 2 is that instead of checking (recurrently over and over) the whole set of attackers, $\{x\}^-$, of a given argument $x$, to see whether $\{x\}^-$ is contained in $S^+ \cup tabu$, one might hold a counter of the attackers of $x$ that are currently in $choice$. Whenever an attacker of $x$ is being moved from $choice$, we decrease the counter. So, for checking whether $\{x\}^- \subseteq S^+ \cup tabu$ all we need to do is to test if the counter is equal to zero. In the following proposition, we prove the usage of this notion (i.e. counting attackers).
\numberwithin{equation}{proposition}
\begin{proposition}
	Let $H=(A,R)$ be an {\sc af}, $S\subseteq A$, $S\cap S^+=\emptyset$, $choice \subseteq A\setminus (S \cup S^+\cup S^-)$, $tabu = A\setminus (S\cup~S^+\cup~choice)$, and $\mu$ be a labelling of $H$ with respect to $S$. Then, for all $x\in A$ it holds that
	\begin{flalign} 
	x \in tabu \wedge \{x\}^- \subseteq S^+ \cup tabu \iff \mu(x)=must\mhyphen out \wedge \pi(x)=0,\\
	x \in choice \wedge \{x\}^- \subseteq S^+ \cup tabu \iff \mu(x)=blank \wedge \pi(x)=0, \\
	x \in tabu \wedge |\{x\}^- \cap choice|=1 \iff	\mu(x)=must\mhyphen out \wedge \pi(x)=1.
	\end{flalign}
\end{proposition}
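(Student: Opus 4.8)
The statement to prove is Proposition 13, which establishes that the three search conditions appearing in lines 2, 3, and 7 of Algorithm 1 can be tested using the labelling $\mu$ together with the attacker-counter $\pi$ of Definition 2. The plan is to prove the three biconditionals (13.1), (13.2), (13.3) separately, since each concerns a different combination of a label and a counter value, but in each case the argument reduces to the same underlying fact: translating membership in $S^+ \cup tabu$ via Definition 1 into a statement about labels, and then re-expressing the condition $\{x\}^- \subseteq S^+ \cup tabu$ (resp.\ $|\{x\}^- \cap choice| = 1$) as a statement about $\pi(x)$.

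The key observation I would isolate first is a small lemma-like equivalence valid for any $x$ with $\mu(x) \in \{blank, must\mhyphen out\}$. By Definition 1, $S^+ = \{y \mid \mu(y) = out\}$ and $tabu = \{y \mid \mu(y) = must\mhyphen out\}$, so $S^+ \cup tabu = \{y \mid \mu(y) \in \{out, must\mhyphen out\}\}$. Since $A$ is partitioned by $\mu$ into the four label classes $\{in, out, blank, must\mhyphen out\}$, for any argument $x$ the attacker set splits as $\{x\}^- = (\{x\}^- \cap S) \cup (\{x\}^- \cap (S^+ \cup tabu)) \cup (\{x\}^- \cap choice)$. The condition $\{x\}^- \subseteq S^+ \cup tabu$ forbids attackers labelled $in$ or $blank$; but once $x$ is itself labelled $blank$ or $must\mhyphen out$, no attacker of $x$ can be labelled $in$, because $\mu(x) = in$ would force $x \in S$ and then any attacker $y \in S$ would give $y \in S^-$, while $x \in choice \cup tabu$ together with $choice \cap (S \cup S^-) = \emptyset$ and the hypotheses on $tabu$ already exclude such attackers. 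Thus the only attackers that can violate $\{x\}^- \subseteq S^+ \cup tabu$ are precisely the $blank$-labelled ones, i.e.\ those counted by $\pi(x) = |\{y \in \{x\}^- : \mu(y) = blank\}|$. Consequently $\{x\}^- \subseteq S^+ \cup tabu \iff \pi(x) = 0$.

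With this equivalence in hand, the three biconditionals follow by conjoining it with the appropriate label condition. For (13.1), $x \in tabu \iff \mu(x) = must\mhyphen out$ (Definition 1), and conjoining with $\{x\}^- \subseteq S^+ \cup tabu \iff \pi(x) = 0$ gives the result. For (13.2), the same argument applies with $x \in choice \iff \mu(x) = blank$. For (13.3), I would instead observe that $\{x\}^- \cap choice = \{y \in \{x\}^- : \mu(y) = blank\}$, so by Definition 2 we have $|\{x\}^- \cap choice| = \pi(x)$; hence $|\{x\}^- \cap choice| = 1 \iff \pi(x) = 1$, and conjoining with $x \in tabu \iff \mu(x) = must\mhyphen out$ yields the claim.

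The main obstacle — really the only nontrivial point — is carefully justifying that for $x$ labelled $blank$ or $must\mhyphen out$ no attacker of $x$ carries the label $in$, so that the $blank$-attackers are exactly the obstruction to $\{x\}^- \subseteq S^+ \cup tabu$ and $\pi(x)$ faithfully counts them. This requires invoking the structural constraints $choice \subseteq A \setminus (S \cup S^+ \cup S^-)$ and $tabu = A \setminus (S \cup S^+ \cup choice)$ from the proposition's hypotheses; the rest is bookkeeping across the four label classes via Definition 1. The remaining steps are routine substitutions and pose no difficulty.
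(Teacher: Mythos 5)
Your proof is correct and follows essentially the same route as the paper's: both arguments rest on the fact that $S$, $S^+$, $choice$, $tabu$ partition $A$, so that $\{x\}^- \subseteq S^+ \cup tabu$ is equivalent to $\{x\}^-$ meeting neither $S$ nor $choice$; attackers in $S$ are ruled out automatically for $x \in choice \cup tabu$, and the remaining condition $\{x\}^- \cap choice = \emptyset$ (resp.\ $|\{x\}^- \cap choice| = 1$) is exactly $\pi(x)=0$ (resp.\ $\pi(x)=1$) by Definitions 1 and 2. The only difference is organizational: you isolate the equivalence $\{x\}^- \subseteq S^+ \cup tabu \iff \pi(x)=0$ as a common lemma, whereas the paper runs the same chain of equivalences three times, once for each of (13.1)--(13.3).

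One sentence of your justification is garbled, although the claim it supports is true and the fix is immediate. To rule out an attacker $y$ of $x$ with $\mu(y)=in$, the fact you cite, $choice \cap (S \cup S^-) = \emptyset$, is not the relevant one: disjointness from $S^-$ says that elements of $choice$ do not attack $S$, which says nothing about $S$ attacking $x$. What is actually needed is that $x \notin S^+$, i.e.\ $choice \cap S^+ = \emptyset$ and $tabu \cap S^+ = \emptyset$ (both contained in the hypotheses you invoke at the end): if some $y \in S$ attacked $x$, then $x \in S^+$, contradicting $x \in choice \cup tabu$. Equivalently, $\{x\}^- \cap S = \emptyset \iff x \notin S^+$, which is precisely the step the paper makes explicit when it rewrites the condition as $x \in tabu \wedge x \notin S^+ \wedge \{y \in \{x\}^- \mid y \in choice\} = \emptyset$. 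Similarly, your clause ``$\mu(x)=in$ would force $x \in S$'' addresses a case excluded by hypothesis ($\mu(x) \in \{blank, must\mhyphen out\}$) and can simply be deleted.
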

\begin{proof}\renewcommand\qedsymbol{}
	Consider Definition 1 and 2 throughout this proof. From the premise of this proposition, note that $S \cup S^+ \cup choice \cup tabu = A$. Additionally, the sets: $S$, $S^+$, $tabu$, and $choice$ are pairwise disjoint. Thus, for all $x$ 
	\begin{equation}
	\begin{array}{l}
	\nonumber
	x \in tabu \wedge \{x\}^- \subseteq S^+ \cup tabu \iff \\
	x \in tabu  \wedge \{x\}^- \cap (S \cup choice)=\emptyset \iff \\
	x \in tabu \wedge x \notin S^+ \wedge \{y \in \{x\}^- \mid y \in choice\}=\emptyset \iff \\
	\mu(x)=must\mhyphen out \wedge \{y \in \{x\}^- \mid \mu(y)=blank\}=\emptyset \iff \\
	\mu(x)=must\mhyphen out \wedge \pi(x)=0.
	\end{array}
	\end{equation}
	That completes the proof of (13.1). Likewise, (13.2) is true since for all $x$ it holds that
	\begin{equation}
	\begin{array}{l}
	\nonumber
	x \in choice \wedge \{x\}^- \subseteq S^+ \cup tabu \iff \\
	x \in choice  \wedge \{x\}^- \cap (S \cup choice)=\emptyset \iff \\
	x \in choice \wedge x \notin S^+ \wedge \{y \in \{x\}^- \mid y \in choice\}=\emptyset \iff \\
	\mu(x)=blank \wedge \{y \in \{x\}^- \mid \mu(y)=blank\}=\emptyset \iff \\
	\mu(x)=blank \wedge \pi(x)=0.
	\end{array}
	\end{equation}
	As to (13.3), for all $x$ it is the case that
	\begin{equation}
	\begin{array}{l}
	\nonumber x\in tabu \wedge |\{x\}^- \cap choice|=1 \iff \\
	\mu(x)=must\mhyphen out \wedge|\{y \in \{x\}^-: y \in choice\}|=1 \iff \\
	\mu(x)=must\mhyphen out \wedge |\{y \in \{x\}^-: \mu(y)=blank\}|=1 \iff \\ \mu(x)=must\mhyphen out \wedge \pi(x)=1.\text{\placeqed}
	\end{array}
	\end{equation}	 	
\end{proof}
Making use of the structures $\mu$ and $\pi$ specified in Definition 1 \& 2 respectively, we give Algorithm 2.  Let $H=(A,R)$ be an {\sc af}, $\mu: A \rightarrow \{in,out,must\mhyphen out,blank\}$ be a total mapping such that for all $x \in A$,
	\[
	\mu(x)=
	\begin{cases}
	must\mhyphen out, &~if~(x,x)\in R;\\
	blank, &~otherwise.\\
	\end{cases}
	\]
And let $\pi: A \rightarrow \{0,1,2,...,|A|\}$ be a total mapping such that for all $x \in A$, $\pi(x)=|\{y \in \{x\}^-: \mu(y)=blank\}|$. By invoking Algorithm 2 with list-stb-ext($\mu$, $\pi$, $\{x \mid \mu(x)=blank \wedge \pi(x)=0\}$), the algorithm computes all stable extensions in $H$.
\begin{algorithm}
	\small{
	\caption{list-stb-ext($\mu$, $\pi$, $\gamma$)}
	\While{$\gamma \not = \emptyset$}{		
		For some $q \in \gamma$ \textbf{do} $\gamma \gets \gamma \setminus \{q\}$;~$\mu(q) \gets in$\;
		\textbf{foreach} $z \in \{q\}^+$ with $\mu(z)=must\mhyphen out$ \textbf{do} $\mu(z)\gets out$\;
		\ForEach{$z \in \{q\}^- \cup \{q\}^+$ with $\mu(z)=blank$}{
			\textbf{if} $z \in \{q\}^+$ \textbf{then} $\mu(z)\gets out$ \textbf{else} $\mu(z)\gets must\mhyphen out$\;
			\ForEach{$x \in \{z\}^+$}{
				$\pi(x) \gets \pi(x)-1$\;
				\textbf{if} $\mu(x)=must\mhyphen out$ with $\pi(x)=0$ \textbf{then} return\;
				\textbf{if} $\mu(x)=blank$ with $\pi(x)=0$ \textbf{then} $\gamma \gets \gamma \cup \{x\}$\;
				\If{$\mu(x)=must\mhyphen out~with~\pi(x)=1$}{
					$\gamma \gets \gamma \cup \{y \in \{x\}^- \mid \mu(y)=blank\}$\;
				}
			}
		}
	}	
	\If{$\{x \mid \mu(x)=blank\}=\emptyset$}{
		\textbf{if} $\{x \mid \mu(x)=must\mhyphen out\}=\emptyset$ \textbf{then} $\{x \mid \mu(x)=in\}$ is stable\;	
		return\;		
	}
	
	list-stb-ext($\mu$, $\pi$, $\{x\}$);~~\tcp{for some $x$ with $\mu(x)=blank$}
	$\mu(x) \gets must\mhyphen out$\; 
	\ForEach{$z \in \{x\}^+$}{
		$\pi(z) \gets \pi(z)-1$\;
		\textbf{if} $\mu(z)=must\mhyphen out$ with $\pi(z)=0$ \textbf{then} return\;
		\textbf{if} $\mu(z)=blank$ with $\pi(z)=0$ \textbf{then} $\gamma \gets \gamma \cup \{z\}$\;
		\textbf{if} $\mu(z)=must\mhyphen out$ and $\pi(z)=1$ \textbf{\footnotesize{then}} $\gamma \gets \gamma \cup \{y\in \{z\}^-\mid \mu(y)=blank\}$\;		
	}
	list-stb-ext($\mu$, $\pi$, $\gamma$)\;
}
\end{algorithm}

\noindent \textbf{Example 3.} Using Algorithm 2, we list the stable extensions of $H_1$ (see Figure~1). So, we start the algorithm with list-stb-ext($\mu$, $\pi$, $\gamma$) such that
\begin{equation}
\begin{array}{l}
	\mu = \{(a,blank),(b,blank),(c,blank),(d,blank),(e,blank),(f,blank)\},\\
	\pi = \{(a,2),(b,2),(c,2),(d,1),(e,1),(f,2)\},\\
	\gamma=\emptyset.
\end{array} \tag{3.1}
\end{equation}
Referring to line 15 in Algorithm 2, let $x$ be the argument $a$. Then, invoke list-stb-ext($\mu$, $\pi$, $\gamma$) with 
 \begin{equation}
 \begin{array}{l}
 	\mu = \{(a,blank),(b,blank),(c,blank),(d,blank),(e,blank),(f,blank)\},\\
 	\pi = \{(a,2),(b,2),(c,2),(d,1),(e,1),(f,2)\},\\
 	\gamma= \{a\}.
 \end{array} \tag{3.2}
 \end{equation}  
Apply a first round of the \textit{while} loop of the algorithm. Therefore,
 \begin{equation}
\begin{array}{l}
\mu = \{(a,in),(b,out),(c,blank),(d,blank),(e,must\mhyphen out),(f,must\mhyphen out)\},\\
\pi = \{(a,0),(b,2),(c,0),(d,0),(e,1),(f,1)\},\\
\gamma= \{c,d\}.
\end{array} \tag{3.3}
\end{equation}
After a second and third round of the \textit{while} loop, 
\begin{equation}
\begin{array}{l}
\mu = \{(a,in),(b,out),(c,in),(d,in),(e,out),(f,out)\},\\
\pi = \{(a,0),(b,2),(c,0),(d,0),(e,1),(f,1)\},\\
\gamma= \emptyset.
\end{array}\tag{3.4}
\end{equation}
Now, $\{a,c,d\}$ is stable, see line 13 in Algorithm 2. Applying line 14, backtrack to state (3.1). Perform the actions at lines 16--21, and then invoke list-stb-ext($\mu$, $\pi$, $\gamma$) (line~22) such that
\begin{equation}
\begin{array}{l}
\mu = \{(a,must\mhyphen out),(b,blank),(c,blank),(d,blank),(e,blank),(f,blank)\},\\
\pi = \{(a,2),(b,1),(c,2),(d,1),(e,1),(f,2)\},\\
\gamma= \emptyset.
\end{array}\tag{3.5}
\end{equation}
Referring to line 15, let $x$ be the argument $b$. Invoke list-stb-ext($\mu$, $\pi$, $\gamma$) with
\begin{equation}
\begin{array}{l}
\mu = \{(a,must\mhyphen out),(b,blank),(c,blank),(d,blank),(e,blank),(f,blank)\},\\
\pi = \{(a,2),(b,1),(c,2),(d,1),(e,1),(f,2)\},\\
\gamma= \{b\}.
\end{array}\tag{3.6}
\end{equation}
Apply a first round of the \textit{while} loop. Thereby,
\begin{equation}
\begin{array}{l}
\mu = \{(a,must\mhyphen out),(b,in),(c,out),(d,out),(e,blank),(f,blank)\},\\
\pi = \{(a,2),(b,0),(c,2),(d,1),(e,0),(f,1)\},\\
\gamma= \{e\}.
\end{array}\tag{3.7}
\end{equation}
As $\gamma \not= \emptyset$, apply a second round of the \textit{while} loop. Thus,
\begin{equation}
\begin{array}{l}
\mu = \{(a,out),(b,in),(c,out),(d,out),(e,in),(f,out)\},\\
\pi = \{(a,1),(b,0),(c,2),(d,1),(e,0),(f,1)\},\\
\gamma= \emptyset.
\end{array}\tag{3.8}
\end{equation}
Referring to line 13 in Algorithm 2, $\{b,e\}$ is stable. Applying line 14, backtrack to state (3.5). Afterwards, apply the lines 16--21 and then invoke list-stb-ext($\mu$, $\pi$, $\gamma$) (line 22) with
\begin{equation}
\begin{array}{l}
\mu = \{(a,must\mhyphen out),(b,must\mhyphen out),(c,blank),(d,blank),(e,blank),(f,blank)\},\\
\pi = \{(a,2),(b,1),(c,1),(d,0),(e,1),(f,2)\},\\
\gamma= \{d\}.
\end{array}\tag{3.9}
\end{equation}
In performing a first round of the \textit{while} loop, we get
\begin{equation}
\begin{array}{l}
\mu = \{(a,must\mhyphen out),(b,out),(c,blank),(d,in),(e,out),(f,out)\},\\
\pi = \{(a,0),(b,1),(c,0),(d,0),(e,1),(f,1)\},\\
\gamma= \{c,f\}.
\end{array}\tag{3.10}
\end{equation}
However, as $\mu(a)=must\mhyphen out$ and $\pi(a)=0$, by applying line 8 we return to a previous state and eventually terminate the procedure.\\

Next, we will give four more propositions that (along with Proposition 9, 10, 11, 12, and 13) will establish the correctness of Algorithm 2. To this end, we denote by $T_i$ the elements of a set $T$ at the algorithm's state $i$. Algorithm~2 enters a new state whenever line 2 or line 16 are executed. Note that lines 2 and 16 are sensible to be selected to designate a beginning of a new state of the algorithm since they include a decision to re-map an argument to $in$ or $must\_out$ rather than an imposed re-mapping under some conditions, such as those conditions in the lines 3--5. Focusing on the arguments that are mapped to $in$, in the initial state of the algorithm, we let
\setcounter{equation}{2}
\counterwithout{equation}{proposition}
\begin{equation}
\{x|\mu_{1}(x)=in\}=\emptyset,
\end{equation} 
and for all states $i$ it holds that
\begin{equation}
\begin{array}{l}
\{x|\mu_{i+1}(x)=in\}= \{x|\mu_{i}(x)=in\}  \text{ (see line 16) or } \\
\{x|\mu_{i+1}(x)=in\}= \{x|\mu_{i}(x)=in\} \cup \{q\}_i
\end{array}
\end{equation}
such that $\{q\}_i$ is a one-element set containing an argument from
\begin{equation}
\{x \mid \mu_i(x)=blank\} \text{ (see line 15 in Algorithm 2)}
\end{equation}
or an argument from
\begin{equation}
\{x \text{ with } \mu_{i}(x)=blank \mid \pi_{i}(x)=0\} \text{ (see lines 9 and 20 in Algorithm 2)}
\end{equation}
or an argument from
\begin{equation}
 \{x \text{ with } \mu_{i}(x)=blank \mid \exists y \in \{x\}^+: \mu_{i}(y)=must\mhyphen out  \wedge \pi_{i}(y)=1\},
\end{equation}
see lines 10, 11 and 21 in Algorithm 2.
 \numberwithin{equation}{proposition}
\begin{proposition}
	Let $H=(A,R)$ be an {\sc af} and $\pi: A \rightarrow \{0,1,2,...,|A|\}$ be a total mapping such that for all $x \in A$, $\pi_1(x)=|\{x\}^- \setminus \{y: (y,y) \in R\}|$. And let $\mu: A \rightarrow \{in,out,must\mhyphen out,blank\}$ be a total mapping such that for all $x \in A$,
	\[
	\mu_1(x)=
	\begin{cases}
	must\mhyphen out, &~if~(x,x)\in R;\\
	blank, &~otherwise.\\
	\end{cases}
	\]
	And assume that Algorithm 2 is started with list-stb-ext($\mu_1$, $\pi_1$, $\{x \mid \mu_1(x)=blank \wedge \pi_1(x)=0\}$). For every state $i$ it holds that $\{x \mid \mu_i(x)=in\} \cap \{x \mid \mu_i(x)=in\}^+ = \emptyset$.
\end{proposition}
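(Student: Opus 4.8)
The plan is to prove the statement by induction on the state index $i$, mirroring the argument for Proposition 5 but phrased through the $in$-label rather than through an explicit set $S$. Write $S_i \myeq \{x\mid \mu_i(x)=in\}$ throughout, so that the claim is simply $S_i\cap S_i^+=\emptyset$. The base case is immediate from $(3)$: since $S_1=\emptyset$, we have $S_1\cap S_1^+=\emptyset$. For the inductive step I would assume $S_i\cap S_i^+=\emptyset$ and split according to the recurrence $(4)$, which records that state $i+1$ is produced either by line 16 or by line 2.

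If state $i+1$ arises from line 16, then $S_{i+1}=S_i$ and the property is inherited verbatim, because line 16 only relabels an argument from $blank$ to $must\mhyphen out$ and hence leaves the $in$-label, and therefore both $S_i$ and $S_i^+$, untouched. The interesting case is line 2, where $S_{i+1}=S_i\cup\{q\}_i$ for the argument $q$ popped from $\gamma$. Expanding,
\[
S_{i+1}\cap S_{i+1}^+=(S_i\cap S_i^+)\cup(S_i\cap\{q\}^+)\cup(\{q\}\cap S_i^+)\cup(\{q\}\cap\{q\}^+),
\]
where the first term vanishes by the induction hypothesis. It then remains to establish the three conditions $S_i\cap\{q\}^+=\emptyset$ (the new argument attacks no current $in$-argument), $\{q\}\cap S_i^+=\emptyset$ ($q$ is not attacked by any current $in$-argument), and $\{q\}\cap\{q\}^+=\emptyset$ ($q$ carries no self-loop).

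To discharge these, I would carry alongside the induction the strengthened invariant that, at every state, each attackee of an already-admitted $in$-argument is labelled $out$ and each attacker of an already-admitted $in$-argument is labelled $out$ or $must\mhyphen out$ (never $blank$ and never $in$). This is exactly what lines 3 and 5 enforce the instant an argument is admitted: when some $p$ enters $in$ at line 2, line 3 recolours its $must\mhyphen out$ attackees to $out$, while line 5 recolours its $blank$ attackees to $out$ and its $blank$ attackers to $must\mhyphen out$. Granting the invariant, the three conditions follow from the fact that $q$ is $blank$ at the instant it is selected from $\gamma$: if $q\in\{q\}^+$ then $q$ would be a self-loop and hence would have started, and remained, $must\mhyphen out$, never entering $\gamma$; if $q\in S_i^+$ then $q$ is an attackee of an $in$-argument and would be $out$; and if $q$ attacked some $in$-argument then, being its attacker, $q$ would be $out$ or $must\mhyphen out$. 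Each possibility contradicts $\mu_i(q)=blank$.

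The main obstacle is justifying the two facts the argument rests on: that the relabelings of lines 3 and 5 are already reflected in $\mu_i$ by the time $q$ is chosen, and that $q$ is genuinely $blank$ at that instant even though it was queued into $\gamma$ earlier. Both hinge on the $\pi$-counter bookkeeping (Definition 2 and Proposition 13) together with the guard on line 8: an argument is enqueued only while $blank$, and the early return on line 8 fires precisely when admitting a further argument would strip some $must\mhyphen out$ argument of its last $blank$ attacker, so the delicate case, a queued $blank$ argument being recoloured to $must\mhyphen out$ by an intervening admission and then popped into $in$, is forestalled before line 2 can misfire. Care is likewise needed with the recursive, backtracking structure of lines 15--22, where $\mu$ is restored on return so that the $in$-set stays monotone along each execution branch, matching the recurrence $(4)$. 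Handling these interactions cleanly, rather than the set algebra itself, is where the real work lies.
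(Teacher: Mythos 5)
Your proposal is correct and follows essentially the same route as the paper's own proof: induction on the state index, expansion of $S_{i+1}\cap S_{i+1}^+$ into the same three residual disjointness conditions, the invariant (from lines 2--5 of Algorithm 2) that attackees of $in$-arguments are $out$ and attackers of $in$-arguments are $out$ or $must\mhyphen out$, and the observation that self-attacking arguments start and remain non-$blank$. The subtleties you flag at the end (whether a queued argument is still $blank$ when popped, and the restoration of $\mu$ under backtracking) are likewise left at the level of assertion in the paper, via its state-transition equations (4)--(7).
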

\begin{proof}\renewcommand\qedsymbol{}
	Since $\{x\mid \mu_{1}(x)=in\}=\emptyset$, $\{x\mid \mu_{1}(x)=in\} \cap \{x\mid \mu_{1}(x)=in\}^+=\emptyset$. We now show that for every state $i$
	\begin{equation}
	\begin{array}{l}
	\{x\mid \mu_i(x)=in\} \cap \{x \mid \mu_{i}(x)=in\}^+=\emptyset \implies\\ \{x\mid \mu_{i+1}(x)=in\} \cap \{x \mid \mu_{i+1}(x)=in\}^+=\emptyset.
	\end{array}
	\end{equation}
	Focusing on the consequence of (14.1) and using (4), we need to prove that
	\begin{equation}
	\begin{array}{l}
\{x\mid \mu_{i+1}(x)=in\} \cap \{x \mid \mu_{i+1}(x)=in\}^+= \\
(\{x\mid \mu_{i}(x)=in\} \cup \{q\}_i) \cap (\{x\mid \mu_{i}(x)=in\}^+ \cup \{q\}_{i}^+) =\emptyset.
	\end{array}
	\end{equation}	
	Subsequently, to establish (14.2), we need to show that for any state $i$
	\begin{flalign}	
	\{x\mid \mu_{i}(x)=in\} \cap \{x \mid \mu_{i}(x)=in\}^+= \emptyset,\\
	\{x\mid \mu_{i}(x)=in\} \cap \{q\}_i^+= \emptyset,\\
	\{q\}_i \cap \{x \mid \mu_{i}(x)=in\}^+=\emptyset, \\
	\nonumber \text{ and }\\
	\{q\}_i \cap \{q\}_i^+ = \emptyset.
	\end{flalign}
	
	Assuming the premise of (14.1), (14.3) follows immediately. \\
	As to (14.4), note that $\mu_i(q)=blank$ for all states $i$, see (4)--(7). Now, suppose that (14.4) is false. Thus, 
	\[\text{ at some state } i,~\exists x \text{ with } \mu_i(x)=in \text{ such that } (q,x) \in R \text{ and } \mu_i(q)=blank.\]
	This means 
	\[\text{ at some state } i,~\exists x \text{ with } \mu_i(x)=in \text{ such that } \exists y \in \{x\}^- \text{ with }\mu_i(y)=blank.\]
	This contradicts the actions of Algorithm 2 (lines 2--5) that indicate
	\[\text{ for all states }i,~\forall x \text{ with } \mu_i(x)=in, \forall y \in \{x\}^-, \mu_i(y) \in \{out,must\_out\}.\]
	Therefore, (14.4) holds. 
	
	Now we prove (14.5). Observe, for any state $i$, $\mu_{i}(q)=blank$, see (4)--(7). Assume that (14.5) is false. Thereby,
	\[\text{ at some state } i,~\exists x \text{ with } \mu_i(x)=in \text{ such that }(x,q)\in R \text{ and } \mu_i(q)=blank.\]
	Thus, 
	\[\text{ at some state }i,~\exists x \text{ with } \mu_i(x)=in \text{ such that } \exists y \in \{x\}^+ \text{ with } \mu_i(y)=blank.\]
	This contradicts the actions of Algorithm 2 (lines 2--5) that require
	\[\text{ for all states }i,~\forall x \text{ with } \mu_i(x)=in, \forall y \in \{x\}^+, \mu_i(y) =out.\]
	Therefore, (14.5) holds. \\
	Now we show (14.6). Recall that (see the conditions of this proposition)
	\[\forall x \text{ with } (x,x) \in R, \mu_1(x)=must\_out,\]
	and so the actions of Algorithm 2 collectively (especially line 3) entail that
	\[\forall x \text{ with } (x,x) \in R, \text{ for all states } i,~\mu_i(x) \in \{out,must\_out\}.\] 
	Now, suppose that (14.6) is false. Thus, $(q,q) \in R$ and $\mu_i(q)=blank$ at some state $i$. Contradiction. \placeqed	
\end{proof}

\begin{proposition}
	Let $H=(A,R)$ be an {\sc af} and $\pi: A \rightarrow \{0,1,2,...,|A|\}$ be a total mapping such that for all $x \in A$, $\pi_1(x)=|\{x\}^- \setminus \{y: (y,y) \in R\}|$. And let $\mu: A \rightarrow \{in,out,must\mhyphen out,blank\}$ be a total mapping such that for all $x \in A$,
	\begin{equation}
	\mu_1(x)=
	\begin{cases}
	must\mhyphen out, &~if~(x,x)\in R;\\
	blank, &~otherwise.\\
	\end{cases}
	\end{equation}	
	And assume that Algorithm 2 is started with list-stb-ext($\mu_1$, $\pi_1$, $\{x \mid \mu_1(x)=blank \wedge \pi_1(x)=0\}$). For every state $i$ it holds that 
	\begin{equation}
	\begin{array}{l}
	\{x \mid \mu_i(x)=blank\} \subseteq\\
	 A \setminus (\{x\mid \mu_i(x)=in\} \cup \{x \mid \mu_i(x)=in\}^+ \cup \{x \mid \mu_i(x)=in\}^-).
	\end{array}
	\end{equation}
\end{proposition}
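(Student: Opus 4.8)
The plan is to prove the inclusion by induction on the state index $i$, following exactly the pattern of the proof of Proposition~6 but expressed through the labelling $\mu$ rather than through the sets $S$ and $choice$. Throughout I would let $I_i$ abbreviate $\{x\mid\mu_i(x)=in\}$ and $B_i$ abbreviate $\{x\mid\mu_i(x)=blank\}$, so that the target statement is the disjointness $B_i\cap(I_i\cup I_i^+\cup I_i^-)=\emptyset$ for every state $i$. The base case is immediate from (3): since $I_1=\emptyset$ we have $I_1\cup I_1^+\cup I_1^-=\emptyset$, whence $B_1\subseteq A$ holds trivially.

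For the inductive step I would assume $B_i\cap(I_i\cup I_i^+\cup I_i^-)=\emptyset$ and split on the two ways, recorded in (4), in which a new state is entered. Consider first the line-16 transition, where $\mu(x)$ is reset from $blank$ to $must\mhyphen out$ for the chosen $x$. Here $I_{i+1}=I_i$, while lines 17--21 modify only $\pi$ and $\gamma$, so $B_{i+1}=B_i\setminus\{x\}\subseteq B_i$. Consequently $B_{i+1}\cap(I_{i+1}\cup I_{i+1}^+\cup I_{i+1}^-)\subseteq B_i\cap(I_i\cup I_i^+\cup I_i^-)=\emptyset$, and this case closes at once.

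The substantive case is the line-2 transition, where some $q$ with $\mu_i(q)=blank$ (by (5)--(7)) is relabelled $in$, so that $I_{i+1}=I_i\cup\{q\}_i$ and hence $I_{i+1}\cup I_{i+1}^+\cup I_{i+1}^-=(I_i\cup I_i^+\cup I_i^-)\cup(\{q\}_i\cup\{q\}_i^+\cup\{q\}_i^-)$. The fact I would extract from the algorithm is that lines 4--5 relabel \emph{every} blank argument that attacks or is attacked by $q$ (an attacker in $\{q\}_i^-$ to $must\mhyphen out$, an attackee in $\{q\}_i^+$ to $out$), while line 2 relabels $q$ itself to $in$; since lines 3--5 introduce no new $blank$ labels, this yields both $B_{i+1}\subseteq B_i$ and the crucial disjointness $B_{i+1}\cap(\{q\}_i\cup\{q\}_i^+\cup\{q\}_i^-)=\emptyset$. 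Combining these, $B_{i+1}\cap(I_{i+1}\cup I_{i+1}^+\cup I_{i+1}^-)$ splits as $\bigl(B_{i+1}\cap(I_i\cup I_i^+\cup I_i^-)\bigr)\cup\bigl(B_{i+1}\cap(\{q\}_i\cup\{q\}_i^+\cup\{q\}_i^-)\bigr)$; the first piece is empty because $B_{i+1}\subseteq B_i$ and the induction hypothesis applies, and the second is empty by the displayed disjointness. This establishes the inclusion at state $i+1$ and completes the induction.

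The main obstacle I anticipate is precisely the book-keeping in the line-2 case: making rigorous the claim that after one iteration of the \emph{while} loop no neighbour of $q$ remains $blank$. This requires reading off the combined effect of lines 2--5 carefully --- in particular, checking that the \textbf{foreach} at line 4 ranges over all of $\{q\}_i^-\cup\{q\}_i^+$, that line 5 relabels each such blank argument, and that nothing in lines 3--12 reintroduces a $blank$ label --- rather than any deep argument. Everything else reduces to the distribution of $(\cdot)^+$ and $(\cdot)^-$ over unions together with the induction hypothesis, exactly as in the proof of Proposition~6.
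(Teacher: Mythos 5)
Your proof is correct, but it takes a genuinely different route from the paper's. The paper does not induct on the state index at all: it asserts, as a direct reading of the algorithm's actions in lines 2--5, the global invariant that at every state $i$ every attackee of an $in$-labelled argument is labelled $out$ and every attacker of an $in$-labelled argument is labelled $out$ or $must\mhyphen out$ (its (15.4)--(15.7)), concludes the containment $\{x\mid \mu_i(x)=in\}^+ \cup \{x\mid \mu_i(x)=in\}^- \subseteq \{x\mid \mu_i(x)=out\} \cup \{x\mid \mu_i(x)=must\mhyphen out\}$ (its (15.8)), and then finishes with a partition argument: since $\mu_i$ is a total mapping, the $blank$-labelled set is automatically disjoint from the $in$-, $out$- and $must\mhyphen out$-labelled sets (its (15.9)), which yields the required disjointness (15.3). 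You instead run an explicit induction in the style of the paper's own Proposition 6, splitting on the line-2 and line-16 transitions recorded in (4), and you invoke the effect of lines 2--5 only locally, for the single newly relabelled argument $q$, together with the observation that no line of the algorithm ever reintroduces a $blank$ label (your $B_{i+1}\subseteq B_i$). The trade-off is this: the paper's argument is shorter and delivers the stronger statement (15.8) as a by-product, but the induction is hidden inside the unproved assertion ``according to the algorithm's actions (lines 2--5)''; your version makes that hidden induction explicit, so the step you flag as the main obstacle --- checking that one iteration of the \emph{while} loop clears every $blank$ neighbour of $q$ and that lines 3--12 create no new $blank$ labels --- is precisely the fact the paper takes for granted. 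Both treatments share the same looseness about backtracking (a state reached on return is a copy of an earlier one, so the invariant persists there), which the linear indexing of (4) glosses over either way.
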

\begin{proof}\renewcommand\qedsymbol{}
	Since $\mu$ is a total mapping from $A$ to $\{in,out,must\mhyphen out,blank\}$, for every state $i$ it holds that $\{x\mid \mu_i(x)=blank\} \subseteq A$. Nonetheless, to establish (15.2) we need to check that for every state $i$
	\begin{equation}
	\begin{array}{l}
	\{x\mid \mu_i(x)=blank\} \cap \\
	(\{x\mid \mu_i(x)=in\} \cup \{x\mid \mu_i(x)=in\}^+ \cup \{x\mid \mu_i(x)=in\}^-) = \emptyset.
	\end{array}
	\end{equation}
	According to the algorithm's actions (lines 2--5), note that for every state $i$
	\[\forall x \text{ with } \mu_i(x)=in,\]
	\[\forall y \in\{x\}^+~\mu_i(y)=out, \text{ and } \forall z \in\{x\}^-~\mu_i(z)\in\{out,must\_out\}.\]
	Thus, for every state $i$
	\begin{flalign}
	\forall y \in \{x\mid \mu_i(x)=in\}^+~\mu_i(y)=out, \\
	\forall z \in \{x\mid \mu_i(x)=in\}^-~\mu_i(z)\in \{out,must\mhyphen out\}. 	
	\end{flalign}
	Rewrite (15.4) and (15.5) respectively as
	\begin{flalign}
	\{x\mid \mu_i(x)=in\}^+ \subseteq \{x\mid \mu_i(x)=out\}, \\
	\{x\mid \mu_i(x)=in\}^- \subseteq \{x\mid \mu_i(x)\in \{out,must\mhyphen out\}\}.
	\end{flalign}
	Thus,
	\begin{flalign}
	\nonumber \{x\mid \mu_i(x)=in\}^+ \cup \{x\mid \mu_i(x)=in\}^-\subseteq \\ \{x\mid \mu_i(x)=out\} \cup \{x\mid \mu_i(x)=must\mhyphen out\}.
	\end{flalign}
	Since $\mu:A\to \{blank,in,out,must\mhyphen out\}$ is a total mapping, for all $i$
	\begin{equation}
	\begin{array}{l}
	\{x\mid \mu_i(x)=blank\} \cap \\
	(\{x\mid \mu_i(x)=in\} \cup \{x\mid \mu_i(x)=out\} \cup \{x\mid \mu_i(x)=must\mhyphen out\})=\emptyset.
	\end{array}
	\end{equation}
	Due to (15.8) and (15.9), (15.3) holds. \placeqed	
\end{proof}

\begin{proposition}
	Let $H=(A,R)$ be an {\sc af} and $\pi: A \rightarrow \{0,1,2,...,|A|\}$ be a total mapping such that for all $x \in A$, $\pi_1(x)=|\{x\}^- \setminus \{y: (y,y) \in R\}|$. And let $\mu: A \rightarrow \{in,out,must\mhyphen out,blank\}$ be a total mapping such that for all $x \in A$,
	\begin{equation}
	\mu_1(x)=
	\begin{cases}
	must\mhyphen out, &~if~(x,x)\in R;\\
	blank, &~otherwise.\\
	\end{cases}
	\end{equation}	
	And assume that Algorithm 2 is started with list-stb-ext($\mu_1$, $\pi_1$, $\{x \mid \mu_1(x)=blank \wedge \pi_1(x)=0\}$). For every state $i$ it holds that 
	\begin{equation}
	\begin{array}{l}
 \{x \mid \mu_i(x)=must\mhyphen out\} = \\
 A \setminus (\{x\mid \mu_i(x)=in\} \cup \{x \mid \mu_i(x)=in\}^+ \cup \{x \mid \mu_i(x)=blank\}).
	\end{array}
	\end{equation}
\end{proposition}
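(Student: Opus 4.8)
The plan is to lean on the fact that $\mu$ is a \emph{total} mapping into a four-element set, so that at every state $i$ the four preimages $\{x\mid\mu_i(x)=in\}$, $\{x\mid\mu_i(x)=out\}$, $\{x\mid\mu_i(x)=blank\}$ and $\{x\mid\mu_i(x)=must\mhyphen out\}$ partition $A$. Reading off the fourth block as the complement of the other three gives, with no further work,
\[
\{x\mid\mu_i(x)=must\mhyphen out\}=A\setminus(\{x\mid\mu_i(x)=in\}\cup\{x\mid\mu_i(x)=out\}\cup\{x\mid\mu_i(x)=blank\}).
\]
Comparing this with the target equality, the whole proposition reduces to swapping $\{x\mid\mu_i(x)=out\}$ for $\{x\mid\mu_i(x)=in\}^+$. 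Hence it suffices to prove the auxiliary identity
\[
\{x\mid\mu_i(x)=out\}=\{x\mid\mu_i(x)=in\}^+
\]
for every state $i$; call it $(\star)$.

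First I would establish $(\star)$ by induction on the state index, using the transitions recorded in (3)--(7). At the initial state $\{x\mid\mu_1(x)=in\}=\emptyset$ and the initial $\mu_1$ assigns only $blank$ and $must\mhyphen out$, so both sides of $(\star)$ are empty. For the step, a new state is entered either through line 16 or through line 2. The line-16 transition merely relabels one argument from $blank$ to $must\mhyphen out$; it touches neither $\{x\mid\mu(x)=in\}$ nor $\{x\mid\mu(x)=out\}$, so $(\star)$ passes unchanged from state $i$ to state $i+1$. The line-2 transition moves a single $q$ with $\mu_i(q)=blank$ to $in$, so $\{x\mid\mu_{i+1}(x)=in\}=\{x\mid\mu_i(x)=in\}\cup\{q\}$ by (4). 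Here I would argue that lines 3 and 5 relabel the \emph{entire} set $\{q\}^+$ to $out$: a member that was $must\mhyphen out$ is caught by line 3, one that was $blank$ is caught by line 5, one already $out$ stays $out$, and none is $in$ by Proposition 14. Therefore $\{x\mid\mu_{i+1}(x)=out\}=\{x\mid\mu_i(x)=out\}\cup\{q\}^+$, and the induction hypothesis together with $(\{x\mid\mu_i(x)=in\}\cup\{q\})^+=\{x\mid\mu_i(x)=in\}^+\cup\{q\}^+$ gives $\{x\mid\mu_{i+1}(x)=out\}=\{x\mid\mu_{i+1}(x)=in\}^+$.

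Substituting $(\star)$ into the complement identity of the first paragraph then yields the claimed equation directly. (One could instead imitate the longer state induction of Proposition 7, computing $\{x\mid\mu_{i+1}(x)=must\mhyphen out\}$ from state-$i$ data for each transition type; the route through $(\star)$ is shorter precisely because the totality of $\mu$ hands us the partition of $A$ for free, and the inclusion $\{x\mid\mu_i(x)=in\}^+\subseteq\{x\mid\mu_i(x)=out\}$ is in any case already the content of the argument used in Proposition 15.)

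I expect the main obstacle to be the bookkeeping at the line-2 transition: one must check that lines 3 and 5 leave \emph{no} out-neighbour of $q$ carrying a stale $blank$ or $must\mhyphen out$ label, which is exactly the case split on the prior label of each $z\in\{q\}^+$, with Proposition 14 ruling out $\mu_i(z)=in$. A second point needing care is that the backtracking returns at lines 8, 14 and 19 restore an earlier, already-valid state rather than forming a transition that contradicts the monotone growth of $\{x\mid\mu(x)=in\}$ assumed in (4); since $(\star)$ is a property of each individual state, it survives these returns intact.
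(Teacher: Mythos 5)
Your proof is correct, and its skeleton is the same as the paper's: use totality of $\mu$ to read $\{x\mid \mu_i(x)=must\mhyphen out\}$ as the complement of the other three label classes, then trade $\{x\mid \mu_i(x)=out\}$ for $\{x\mid \mu_i(x)=in\}^+$ inside that complement. The difference lies in how the trade is justified, and here your version is actually the more complete one. The paper records only the inclusion $\{x\mid \mu_i(x)=in\}^+ \subseteq \{x\mid \mu_i(x)=out\}$ (its (16.3), read off from lines 2--5) and then declares the target ``rephrased'' as the trivial partition identity; but, as your reduction makes explicit, the substitution inside the complement also needs the reverse inclusion: an argument labelled $out$ that were not attacked by any $in$-labelled argument would lie in $A\setminus(\{x\mid \mu_i(x)=in\}\cup \{x\mid \mu_i(x)=in\}^+\cup\{x\mid \mu_i(x)=blank\})$ without being labelled $must\mhyphen out$, breaking the claimed equality. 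The paper only makes the two-sided identity explicit later, as (17.4) inside the proof of Proposition 17, where it is again asserted directly from the algorithm's actions. Your state induction for $(\star)$ --- both sides empty at state $1$, untouched by line-16 transitions, and growing by exactly $\{q\}^+$ on both sides at line-2 transitions, with Proposition 14 ruling out $in$-labelled members of $\{q\}^+$ (including the degenerate self-attack case $q\in\{q\}^+$) --- supplies precisely the missing half and turns the paper's appeal to ``the actions of lines 2--5'' into an actual invariant proof. So: same decomposition, but your route buys rigor at the key step, at the cost of one extra induction; the paper's buys brevity by asserting as evident what you prove.
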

\begin{proof}\renewcommand\qedsymbol{}
Note that according to the actions of Algorithm 2 (lines 2--5), for every state $i$ it is the case that
\[\forall x \text{ with } \mu_i(x)=in, \forall y \in \{x\}^+~\mu_i(y)=out.\]
Subsequently, for every state $i$ it holds that
\begin{equation}
\forall y \in \{x\mid \mu_i(x)=in\}^+~\mu_i(y)=out.
\end{equation}
Using (16.3) to rephrase (16.2), we need to show that for all $i$
\begin{equation}
 \{x \mid \mu_i(x)=must\mhyphen out\} = A \setminus \{x\mid \mu_i(x) \in\{in,out,blank\}\}.
\end{equation}
Nonetheless, since $\mu:A\to \{in,out,blank,must\mhyphen out\}$ is a total mapping, we note that for all $i$ 
\begin{equation}\begin{array}{l}
\{x\mid \mu_i(x)=must\mhyphen out\} \cap \{x\mid \mu_i(x) \in \{in,out,blank\}\}=\emptyset, \text{ and } \\
\{x\mid \mu_i(x)=must\mhyphen out\} \cup \{x\mid \mu_i(x) \in \{in,out,blank\}\}=A,
\end{array}\end{equation}
which immediately demonstrates (16.4). Thus, the proof of (16.2) is complete.\placeqed	
\end{proof}

\numberwithin{equation}{proposition}
\begin{proposition}
Let $H=(A,R)$ be an {\sc af} and $\pi: A \rightarrow \{0,1,2,...,|A|\}$ be a total mapping such that for all $x \in A$ it holds that $\pi_1(x)=|\{x\}^- \setminus \{y: (y,y) \in R\}|$. And let $\mu: A \rightarrow \{in,out,must\mhyphen out,blank\}$ be a total mapping such that for all $x \in A$,
\[
\mu_1(x)=
\begin{cases}
must\mhyphen out, &~if~(x,x)\in R;\\
blank, &~otherwise.\\
\end{cases}
\]
Assume that Algorithm 2 is started with list-stb-ext($\mu_1$, $\pi_1$, $\{x \mid \mu_1(x)=blank \wedge \pi_1(x)=0\}$). Then, the algorithm computes exactly the stable extensions in $H$.
\end{proposition}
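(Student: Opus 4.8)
The plan is to mirror the two-part argument (soundness \emph{P1} and completeness \emph{P2}) used for Proposition 8, transporting every set-level fact about Algorithm 1 to the labelling level of Algorithm 2 through Definition 1, and bridging the set-membership tests of Algorithm 1 to the counter tests of Algorithm 2 through Definition 2 and Proposition 13. First I would fix, at each state $i$, the identifications $S_i := \{x\mid\mu_i(x)=in\}$, $choice_i := \{x\mid\mu_i(x)=blank\}$, and $tabu_i := \{x\mid\mu_i(x)=must\mhyphen out\}$, and verify that $\mu_i$ is a labelling of $H$ with respect to $S_i$. The three hypotheses of Definition 1, namely $S_i\cap S_i^+=\emptyset$, $choice_i\subseteq A\setminus(S_i\cup S_i^+\cup S_i^-)$, and $tabu_i=A\setminus(S_i\cup S_i^+\cup choice_i)$, are furnished respectively by Propositions 14, 15, and 16, which are precisely the labelling analogues of Propositions 5, 6, and 7. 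The one remaining clause, $\{x\mid\mu_i(x)=out\}=S_i^+$, holds because the inclusion $S_i^+\subseteq\{x\mid\mu_i(x)=out\}$ is exactly (16.3), while the reverse follows from a short induction on the execution: an argument acquires the label $out$ only at lines 3 and 5, and only as an out-neighbour of an argument just relabelled $in$. With $\mu_i$ confirmed to be a labelling at every state, soundness (\emph{P1}) is immediate, since Algorithm 2 announces a set only at line 13, and only when $\{x\mid\mu_i(x)=blank\}=\emptyset$ and $\{x\mid\mu_i(x)=must\mhyphen out\}=\emptyset$, whence Proposition 9 gives at once that $\{x\mid\mu_i(x)=in\}$ is stable; this is the exact counterpart of the use of Proposition 1 in Proposition 8.

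The ingredient genuinely new relative to Proposition 8 is the correctness of the counter $\pi$, which I would establish as an invariant maintained at every state: $\pi_i(x)=\lvert\{y\in\{x\}^-:\mu_i(y)=blank\}\rvert$ for every $x$ with $\mu_i(x)\in\{blank,must\mhyphen out\}$, as demanded by Definition 2. This is preserved by the decrements at lines 7 and 18, which fire exactly when some attacker of $x$ leaves the $blank$ class. Granting this invariant, Proposition 13 converts the three guiding set-tests of Algorithm 1 (its lines 2, 3, and 7) into the local $\pi$-tests of Algorithm 2: $\mu(x)=must\mhyphen out$ with $\pi(x)=0$ is the abort of lines 8 and 19, $\mu(x)=blank$ with $\pi(x)=0$ is the forced inclusion of lines 9 and 20, and $\mu(x)=must\mhyphen out$ with $\pi(x)=1$ is the forced inclusion of the unique surviving $blank$-attacker at lines 10--11 and 21. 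The soundness of these moves is then guaranteed by Propositions 10, 11, and 12 (the analogues of Propositions 2, 3, and 4): the first two show that any argument so forced into $in$ lies in every stable $T\supseteq S_i$ with $T\setminus S_i\subseteq\{x\mid\mu_i(x)=blank\}$, while the third shows that no such stable $T$ exists when the abort condition triggers. For completeness (\emph{P2}) I would then replay the four-case argument of Proposition 8, now observing that the binary branch of lines 15 and 16 (force $x$ into $in$ in the first recursive call, versus relabel $x$ to $must\mhyphen out$ before the second) exhausts the two cases $x\in Q$ and $x\notin Q$ for a target stable extension $Q$, and invoking the reduction already carried out in Proposition 8 from ``$Q\subseteq S_i$ is reported'' to ``$Q=S_i$ is reported''.

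I expect the main obstacle to be the consistency of the shared mutable pair $(\mu,\pi)$ across backtracking, rather than any individual inclusion. Unlike Algorithm 1, whose recursive calls each receive freshly computed set arguments, Algorithm 2 threads one mutable labelling and one mutable counter through the recursion and repairs them in place at lines 16--21 after the first recursive call returns. The delicate point is to certify that the labelling property of Definition 1, the counter invariant of Definition 2, and the auxiliary claim $\{x\mid\mu_i(x)=out\}=S_i^+$ are all genuinely re-established at each return, so that the ``for every state $i$'' quantifier in Propositions 14--16 and in the counter invariant remains valid even after deep recursion. Once this state-restoration bookkeeping is pinned down, the rest is the essentially mechanical transport of the already-proved facts through Definition 1 and Proposition 13.
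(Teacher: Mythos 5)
Your proposal is correct and takes essentially the same route as the paper: the identical \emph{P1}/\emph{P2} decomposition inherited from Proposition 8, the same supporting propositions, and the same four-case contradiction argument for completeness ending with the reduction from $Q \subseteq \{x \mid \mu_i(x)=in\}$ to $Q = \{x \mid \mu_i(x)=in\}$. The only cosmetic differences are that the paper establishes \emph{P1} from Proposition 14, the totality of $\mu_i$, and the equality $\{x \mid \mu_i(x)=in\}^+ = \{x \mid \mu_i(x)=out\}$ (its (17.4)) rather than formally passing through Proposition 9, and it leaves the $\pi$-counter invariant and the backtracking state bookkeeping implicit, both of which you rightly make explicit.
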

\begin{proof}\renewcommand\qedsymbol{}
	The proof is composed of two parts:
	\begin{enumerate}	
		\item[\emph{P1}.] At some state $i$, let $\{x\mid \mu_i(x)=in\}$ be the set reported by Algorithm~2 at line 13. Then, we need to prove that $\{x\mid \mu_i(x)=in\}$ is a stable extension in $H$. 
		\item[\emph{P2}.] For all $Q$, if $Q$ is a stable extension in $H$ and Algorithm 2 is sound (i.e. \emph{P1} is established), then there is a stable extension $\{x\mid \mu_i(x)=in\}$, reported by the algorithm at line 13 at some state $i$, such that $\{x\mid \mu_i(x)=in\}=Q$.		
	\end{enumerate}
	To establish \emph{P1}, it suffices to show that $\{x\mid \mu_i(x)=in\}=A \setminus \{x\mid \mu_i(x)=in\}^+$, which means it is required to prove that
	\begin{equation}
	\{x\mid \mu_i(x)=in\} \cap \{x\mid \mu_i(x)=in\}^+ = \emptyset,
	\end{equation}
	and
	\begin{equation}
	\{x\mid \mu_i(x)=in\} \cup \{x\mid \mu_i(x)=in\}^+=A.
	\end{equation}
	Observe, (17.1) is already established in Proposition 14. Now, we prove (17.2). As $\mu: A\rightarrow \{in,out,must\mhyphen out,blank\}$ is a total mapping,
	\begin{equation}
	\begin{array}{l}
\{x\mid \mu_i(x)=in\} \cup \{x\mid \mu_i(x)=out\}~\cup \\
 \{x\mid \mu_i(x)=blank\} \cup \{x\mid \mu_i(x)=must\mhyphen out\}=A.
\end{array}
	\end{equation}
	According to the algorithm's actions (lines 2--5), 
	\[\forall x \text{ with } \mu_i(x)=in, \forall y \in \{x\}^+, \mu_i(y)=out\]
	and
	\[\forall y \text{ with }\mu_i(y)=out, \exists x \text{ with } \mu_i(x)=in \text{ such that }y \in \{x\}^+.\]
	Thus, for every state $i$ it holds that
	\begin{equation}
	\{x\mid \mu_i(x)=in\}^+=\{x\mid \mu_i(x)=out\}.
	\end{equation}
	Observe, Algorithm 2 reports that $\{x\mid \mu_i(x)=in\}$ is stable if and only if $\{x\mid \mu_i(x)=blank\}=\emptyset$ and $\{x\mid \mu_i(x)=must\mhyphen out\}=\emptyset$, see lines 12--13. Using (17.4), we rewrite (17.3) as
	\begin{equation}
	\{x\mid \mu_i(x)=in\} \cup \{x\mid \mu_i(x)=in\}^+ \cup \emptyset \cup \emptyset=A,
	\end{equation}
	which means that (17.2) holds.	\\
Now we prove \emph{P2}. We rewrite \emph{P2} (by modifying the consequence) into \emph{\'P2}.\\ \\
\emph{\'P2}: For all $Q$, if $Q$ is a stable extension in $H$ and Algorithm 2 is sound, then there is a stable extension $\{x\mid \mu_i(x)=in\}$, reported by the algorithm at line 13 at some state $i$, such that for all $a \in Q$ it holds that $\mu_i(a)=in$. \\ \\
We establish \emph{\'P2} by contradiction. Later, we show that the consequence of \emph{\'P2} is equivalent to the consequence of \emph{P2}. Now, assume that \emph{\'P2} is false. \\ \\\
\emph{Negation of \'P2}: There is $Q$ such that $Q$ is a stable extension in $H$, Algorithm~2 is sound, and for every $\{x\mid \mu_i(x)=in\}$ reported by the algorithm at line 13 at some state~$i$, there is $a \in Q$ such that $\mu_i(a) \neq in$.\\ \\
We identify four cases.
\paragraph{Case 1} For $\mu_1(a)=blank$, if the algorithm terminates at line 8 during the very first execution of the while block (but not necessarily from the first round), then, since the algorithm is sound, $H$ has no stable extensions. This contradicts the assumption that $Q \supseteq \{a\}$ is a stable extension in~$H$. Hence, \emph{\'P2} holds.
\paragraph{Case 2} If $(a,a) \in R$, then $\mu_1(a)=must\_out$, which contradicts the assumption that $Q\supseteq \{a\}$ is a stable extension in $H$. Hence, \emph{\'P2} holds. 
\paragraph{Case 3} With $\mu_1(a)=blank$, assume that after the very first execution of the while block (i.e. including one or more rounds), $\mu_k(a)= blank$ for some state $k\ge1$. Then, for a state $i\ge k$, let $x=a$ (see line 15 in Algorithm 2). If for all subsequent states $j > i$, the set $\{x\mid \mu_j(x)=in\} \supseteq \{a\}$ is not reported stable by the algorithm, then, since the algorithm is sound, $a$ does not belong to any stable extension. This contradicts the assumption that $Q \supseteq \{a\}$ is a stable extension in $H$. Hence, \emph{\'P2} holds. 
\paragraph{Case 4} With $\mu_1(a)=blank$, assume that after the very first execution of the while block (i.e. including one or more rounds), $\mu_i(a) \neq blank$ for some state $i>1$. This implies, according to the while block's actions, that $\mu_i(a) \in \{in,out,must\_out\}$. For $\mu_i(a)=in$, if for all subsequent states $j > i$, the set $\{x\mid \mu_j(x)=in\} \supseteq \{a\}$ is not reported stable by the algorithm, then, since the algorithm is sound, $a$ does not belong to any stable extension. This contradicts the assumption that $Q \supseteq \{a\}$ is a stable extension in $H$. Likewise, for $\mu_i(a)\in\{out,must\_out\}$, since the algorithm's actions are sound, this implies that $a$ does not belong to any stable extension. This contradicts the assumption that $Q \supseteq \{a\}$ is a stable extension in $H$. Hence, \emph{\'P2} holds.\\ \\
Now we rewrite the consequence of \emph{\'P2}.\\ \\
\emph{The consequence of \'P2}: There is a stable extension $\{x\mid \mu_i(x)=in\}$, reported by the algorithm at line 13 at some state $i$, such that $Q \subseteq \{x\mid \mu_i(x)=in\}$.\\ \\
$Q$ being a proper subset of $\{x\mid \mu_i(x)=in\}$ is impossible because otherwise it contradicts that the algorithm is sound or that $Q$ is stable. Therefore, the consequence of \emph{\'P2} can be rewritten as next.\\ \\
\emph{The consequence of \'P2}: There is a stable extension $\{x\mid \mu_i(x)=in\}$, reported by the algorithm at line 13 at some state $i$, such that $Q = \{x\mid \mu_i(x)=in\}$.\\ \\
This is exactly the consequence of \emph{P2}. \placeqed
\end{proof}
\section{Conclusion}
We presented formal validation of a known labeling algorithm for listing all stable extensions in a given abstract argumentation framework. Our validation process given in this paper encourages more investigations in the research arena of abstract argumentation. Despite being experimentally verified, several existing labeling algorithms for abstract argumentation lack formal validation, which can be done in the spirit of this article. Moreover, since the core of labeling algorithms is backtracking, this paper might stimulate further work to validate backtracking procedures in some other application domains.

\bibliographystyle{plain}
\bibliography{references}

\end{document}